\newcommand{\Fact}{\textit{Fact}}
\newcommand{\PL}{P\!L}
\def\P{p}
\newcommand{\fmax}{F}
\newcommand{\pr}{P}
\newcommand{\pos}{\mathit{pos}}
\newcommand{\fmin}{f}
\newcommand{\select}{{\textit{select}}}
\newcommand{\rank}{{\textit{rank}}}
\renewcommand{\epsilon}{\varepsilon}
\renewcommand{\epsilon}{\varepsilon}
\newcommand{\PNF}{\mathrm{PNF}}
\newcommand{\LPNone}{{\mathcal L}_{\textrm{PN1}}}
\newcommand{\LPNzero}{{\mathcal L}_{\textrm{PN0}}}
\newcommand{\lext}{\mathcal{L}_{\textrm{ext}}}
\newcommand{\pext}{\textit{ext}}
\newcommand{\pn}{\textit{pnw}}
\newcommand{\crit}{\textit{ecrit}}
\newcommand{\pushright}[1]{\ifmeasuring@#1\else\omit\hfill$\displaystyle#1$\fi\ignorespaces}
\newcommand{\pushleft}[1]{\ifmeasuring@#1\else\omit$\displaystyle#1$\hfill\fi\ignorespaces}
\newcommand{\floor}[1]{\lfloor #1 \rfloor}
\newtheorem{theorem}{Theorem}
\newtheorem{proposition}[theorem]{Proposition}
\newtheorem{conjecture}[theorem]{Conjecture}
\newtheorem{lemma}[theorem]{Lemma}
\newdefinition{remark}{Remark}
\newdefinition{example}{Example}
\newdefinition{definition}{Definition}
\newproof{proof}{Proof}
\newcommand{\normal}{prefix normal} 
\newcommand{\pword}[1]{$#1$-prefix normal}  
\newcommand{\peq}[1]{$#1$-prefix equivalent} 
\begin{document}

\begin{frontmatter}

\title{On Prefix Normal Words and Prefix Normal Forms\footnote{Preliminary versions of parts of this article have appeared in~\cite{FL11}, \cite{pn1} and~\cite{pn2}.}}

\author[auth1]{P\'eter Burcsi}
\address[auth1]{Dept.\ of Computer Algebra, E\"otv\"os Lor\'and Univ., Budapest, Hungary}
\ead{bupe@compalg.inf.elte.hu}

\author[auth2]{Gabriele Fici}
\address[auth2]{Dip.\ di Matematica e Informatica, University of Palermo, Italy}
\ead{gabriele.fici@unipa.it}

\author[auth3]{Zsuzsanna Lipt\'ak\corref{corrauth}}
\cortext[corrauth]{Corresponding author}
\address[auth3]{Dip.\ di Informatica, University of Verona, Italy}
\ead{zsuzsanna.liptak@univr.it}

\author[auth4]{Frank Ruskey}
\address[auth4]{Dept.\ of Computer Science, University of Victoria, Canada}
\ead{ruskey@cs.uvic.ca}

\author[auth5]{Joe Sawada}
\address[auth5]{School of Computer Science, University of Guelph, Canada}
\ead{jsawada@uoguelph.ca}

\date{}

\begin{abstract}
A \pword{1} word is a binary word with the property that no factor has more $1$s than the prefix of the same length; a \pword{0} word is defined analogously. 
These words arise in the context of indexed binary jumbled pattern matching, where the aim is to decide whether a word has a factor with a given number of $1$s and $0$s (a given Parikh vector).  Each binary word has an associated set of Parikh vectors of the factors of the word.  Using prefix normal words, we provide a characterization of the equivalence class of binary words having the same set of Parikh vectors of their factors.

We prove that the language of prefix normal words is not context-free and is strictly contained in the language of pre-necklaces, which are prefixes of powers of Lyndon words. We give enumeration results on $\pn(n)$, the number of prefix normal words of length $n$, showing that, for sufficiently large $n$,
\[
2^{n-4 \sqrt{n \lg n}} \le \pn(n) \le 2^{n - \lg n + 1}.
\]

For fixed density (number of $1$s), we show that the ordinary generating function of the number of prefix normal words of length $n$ and density $d$ is a rational function. Finally, we give experimental results on $\pn(n)$, discuss further properties, and state open problems.
\end{abstract}

\begin{keyword}
\normal{} words, \normal{} forms, binary languages, binary jumbled pattern matching, pre-necklaces, Lyndon words, enumeration.
\end{keyword}

\end{frontmatter}


\section{Introduction}

A binary word is called {\em \pword{1}} if no factor (substring) has more $1$s than the prefix of the same length.
For example, $11010$ is \pword{1}, but $10110$ is not. Similarly, a binary word is called {\em \pword{0}} if no factor has more $0$s than the prefix of the same length. When not further specified, by {\em prefix normal} we mean \pword{1}. In~\cite{pn1}, we gave an algorithm for generating all prefix normal words of fixed length $n$. As we will see later, to each binary word, a \pword{1} word and a \pword{0} word can be associated in a unique way, which we will call its {\em prefix normal forms}. 

\medskip

The {\em Parikh vector} of a binary word $u$ is the pair $(x,y)$, where $x$ is the number of $1$s in $u$, and $y$ is the number of $0$s in $u$. The set of Parikh vectors of factors of a word $w$ is called the \emph{Parikh set} of $w$. For binary words, the problem of deciding whether a particular pair $(x,y)$ lies in the Parikh set of a word $w$ is known as {\em Binary Jumbled Pattern Matching} (BJPM). There has been much interest recently in the indexed version of this problem (IBJPM), where an index for the Parikh set is created in a preprocessing step, which can then be used to answer queries fast. The Parikh set can be represented in linear space due to the following {\em interval property} of binary strings: If $w$ has $k$-length substrings with $x_1$ resp.\ $x_2$ occurrences of $1$, where $x_1< x_2$, then it also has a $k$-length substring with $y$ occurrences of $1$, for every $x_1\leq y \leq x_2$. Thus the Parikh set can be represented by storing, for every $1\leq k \leq |w|$,  the minimum and maximum number of $1$s in a substring of length $k$. Much recent research has focused on how to compute these numbers efficiently~\cite{CFL09,MR10,MR12,CLWY12,BFKL13,GG13,GHLW13}. The problem has also been extended to graphs and trees~\cite{GHLW13,CGGLLRT13}, to the streaming model~\cite{LLZ12}, and to approximate indexes~\cite{CLWY12}. There is also interest in the non-binary variant~\cite{ELP04,CieliebakELSW04,BEL04,CFL09,BCFL12_IJFCS,BCFL12_TOCS,KRR13}, as well as in reconstruction from the Parikh multi-set of a string~\cite{ADMOP10}. 
Applications in computational biology include SNP discovery, alignment, gene clusters, pattern discovery, and mass spectrometry data interpretation~\cite{Boecker07,Benson03,BoeckerJMS08,DuhrkopLMB13,Parida06}.

The current best construction algorithm for the linear size index for IBJPM runs in $O(n^{1.864})$ time~\cite{CL15}, for a word of length $n$. As we will see later, computing the prefix normal forms of a word $w$ is equivalent to creating an index for the Parikh set of $w$. Currently, we know no faster computation algorithms for the prefix normal forms than already exist for the linear-size index. However, should better algorithms be discovered, these would immediately carry over to the problem of IBJPM.

\medskip

It is worthwhile noting that some relevant sequences have made it into the
On-Line Encyclopedia of Integer Sequences (OEIS~\cite{sloane}):
A194850 is the number of prefix normal words of length $n$,
A238109 is a list of prefix normal words (over the alphabet $\{1,2\}$), and
A238110 is the maximum size of a class of binary words of length $n$ having the same prefix normal form.

\medskip

The paper is organized as follows: Section 2 contains basic definitions and results
about prefix normal words; in particular that there are unique \pword{0} and \pword{1} words
associated with every word, and thus the set of words can be partitioned according
to this association.  In Section 3 we consider the set of prefix normal words, giving several properties and
characterizations and showing that their language is not context free.  One of these properties
is then used in Section 4, which is concerned with counting the number of prefix normal words of
a given length.  Finally, the paper concludes with some open problems in Section 5.

\section{Basics}\label{sec:pnf}

A {\em binary word} (or {\em string}) $w=w_1\cdots w_n$ over $\Sigma=\{0,1\}$ is a finite sequence of elements $w_i \in  \Sigma,$ for $i=1,\ldots,n$. Its length $n$ is denoted by $|w|$. We denote by $\Sigma^n$ the set of words over $\Sigma$ of length $n$, by $\Sigma^{*} = \cup_{n\geq 0} \Sigma^n$ the set of finite words  over $\Sigma$, and 
the empty word by $\epsilon$.
Let $w\in \Sigma^{*}$. If $w=uv$ for some $u,v\in\Sigma^{*}$, we say that $u$ is a \emph{prefix} of $w$ and $v$ is a \emph{suffix} of $w$.  A \emph{factor} or \emph{substring} of $w$ is a prefix of a suffix of $w$.  We denote the set of factors of $w$ by $\Fact(w)$.  Let $w=w_1\cdots w_n\in \Sigma^*$, then the word $\tilde{w}=w_{n}w_{n-1}\cdots w_{1}$ is called the {\em reversal} of $w$. A word $w$ s.t.\ $w = \tilde{w}$ is called a {\em palindrome}. 
A {\em binary language} is any subset $\cal L$ of $\Sigma^*$.

We denote by $|w|_{1}$ the number of $1$s in the word $w$; similarly, $|w|_0$ is the number of $0$s in $w$. The {\em Parikh vector} of a word $w$ over $\Sigma$ is defined as $\P(w)=(|w|_{0},|w|_{1})$. The {\em Parikh set} of $w$ is $\Pi(w)=\{\P(v) \mid  v\in \Fact(w)\}$, the set of Parikh vectors of the factors of $w$.
For example $\P(011) = \P(101) = (1,2)$ and $\Pi(011) = \{(0,0),(0,1),(0,2),(1,0),(1,1),(1,2)\} = \Pi(101) \cup \{(0,2)\}$.

Given a binary word $w$, we denote by $\pr_1(w,i)$ the number of $1$s in the prefix of length $i$ and by $\pos_1(w,i)$ the position of the $i$th $1$ in the word $w$, i.e.\ $\pr_{1}(w,i) = |w_1\cdots w_i|_1$ and $\pos_1(w,i) = \min\{ k : |w_1\cdots w_k|_1 = i\}$.  The functions $\pr_0$ and $\pos_0$ are defined similarly.  Note that in the context of succint indexing, these functions are frequently called $\rank$ and $\select$, cf.~\cite{NavMaek07}: We have, for $x=0,1$, $ \pr_x(w,i) = \rank_x(w,i)$ and $ \pos_x(w,i) = \select_x(w,i)$.  

\subsection{Prefix normal words}

\begin{definition}[Maximum-ones and maximum-zeros functions]
Let $w\in \Sigma^*$. We define, for each $0\leq k\leq |w|$:
\begin{eqnarray*}
\fmax_1(w,k) =  \max\{|v|_1 \mid v\in \Fact(w)\cap \Sigma^k\},
\end{eqnarray*}

\noindent the maximum number of $1$s in a factor of $w$ of length $k$.
When no confusion can arise, we also write $F_1(k)$ for $F_1(w,k)$.
The function $\fmax_0(w,k)$ is defined analogously by taking $0$ in place of $1$.
\end{definition}

For a word $w$, we denote by $F_1(w)$ the function $k\mapsto F_1(w,
k)$ (and similarly with other functions taking arguments $w$ and $k$).

\begin{example}\label{ex:1}
 Take $w = 1010011011000111001011$. In Table~\ref{table:val}, we give the values of $\fmax_{1}$ and $\fmax_{0}$ for $w$.

\begin{table}
\centering
\begin{small}
\begin{raggedright}
\begin{tabular}{r || c *{29}{@{\hspace{2.1mm}}l}}
 $k$  &  0\hspace{1ex}  & 1\hspace{1ex} & 2\hspace{1ex} & 3\hspace{1ex} & 4\hspace{1ex} & 5\hspace{1ex} & 6\hspace{1ex} & 7\hspace{1ex} &
8\hspace{1ex} & 9\hspace{1ex} & 10 & 11 & 12 & 13 & 14 & 15 & 16 & 17 & 18 & 19 & 20 &21 &22\\
\hline \rule[-6pt]{0pt}{22pt}
$ \fmax_{1}$ &  0    & 1& 2& 3& 3& 4& 4& 4& 5& 6& 6& 7& 7& 7& 8& 8& 9& 10& 10& 10& 11& 11& 12  \\
$ \fmax_{0}$ &  0    & 1& 2& 3& 3& 3& 4& 4& 5& 5& 6& 6& 7& 7& 7& 8& 8& 9& 9& 10& 10& 10& 10\\
\hline 
\end{tabular}
\end{raggedright}\caption{\label{table:val}The sequences $ \fmax_{1}$ and $ \fmax_{0}$ for the word $w = 1010011011000111001011$.}
\end{small}
\end{table}
\end{example}

\begin{definition}[Prefix normal words]
A word $w\in \{0,1\}^*$ is called {\em \pword{1}} if $\pr_1(w) = \fmax_1(w)$.
It is called {\em \pword{0}} if $\pr_0(w) = \fmax_0(w)$.
In other words, $w$ is \pword{1} (\pword{0}) if and only if it does not have any factors with more $1$s (more $0$s) than the prefix of the same length. When not specified, then by \emph{prefix normal} we mean {\em \pword{1}}.
\end{definition}

\begin{example}  The word $w = 1100110$ is \pword{1}, but the word $w1 = 11001101$ is not \pword{1} because the factor $1101$ has three $1$s,
while the prefix of length $4$ has only two.   Also, $w$ is not \pword{0} since every \pword{0} word, except those of the form $1^*$,  must start with a 0.
\end{example}

We will soon see that it is possible to find, for every word $w$, a \pword{1} word which has the same maximum-ones function $F_1$ as $w$; and analogously for $0$. These will be called the prefix normal forms of $w$. To this end, we define the following equivalence; we will then see that equivalent words have the same prefix normal form.

\begin{definition}[Prefix equivalence]
Two words $v,w\in \Sigma^*$ are called \emph{\peq{1}} if $\fmax_1(v) = \fmax_1(w)$.
They are called \emph{\peq{0}} if $\fmax_0(v) = \fmax_0(w)$.
\end{definition}

\begin{example}
The words $11010, 10110, 01101, 01011$ are all \peq{1}, but not \peq{0}. When considering $0$, we have that $\{01011$, $11010$, $10101\}$ constitute one equivalence class,
and $\{01101, 10110\}$ another one (note that in the first class, there is an additional word not present in the 1-prefix equivalence class).
\end{example}

Next we will show that every equivalence class contains exactly one \normal{} word (Theorem \ref{thm:pnf}), which can thus be used as its representative.  
This will allow us to associate two prefix normal words to every word $w$ (Definition \ref{def:pnf}). First we need the following lemma.

\begin{lemma}\label{lemma:Fa}
Let $w\in \Sigma^*$. Then, for all $0\leq i \leq j \leq |w|$:
$F_1(j) - F_1(i) \leq F_1(j-i)$.
\end{lemma}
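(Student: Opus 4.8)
The plan is to prove the inequality $F_1(j) - F_1(i) \leq F_1(j-i)$ by exhibiting, for a well-chosen factor of length $j$, a sub-factor of length $j-i$ that witnesses the required number of $1$s. First I would fix $i \leq j$ and let $v \in \Fact(w) \cap \Sigma^j$ be a factor of length $j$ achieving the maximum, so that $|v|_1 = F_1(j)$. Write $v = v_1 \cdots v_j$ and split it as $v = v' v''$ where $v' = v_1 \cdots v_i$ is the length-$i$ prefix of $v$ and $v'' = v_{i+1} \cdots v_j$ is the length-$(j-i)$ suffix of $v$. Both $v'$ and $v''$ are themselves factors of $w$ (a factor of a factor is a factor of $w$), and $v''$ has length $j-i$, so $|v''|_1 \leq F_1(j-i)$ by definition of $F_1$. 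Since $|v|_1 = |v'|_1 + |v''|_1$ and $|v'|_1 \leq F_1(i)$, we obtain
\[
F_1(j) = |v|_1 = |v'|_1 + |v''|_1 \leq F_1(i) + F_1(j-i),
\]
which rearranges to the claim.

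Wait — that gives $F_1(j) \leq F_1(i) + F_1(j-i)$ directly, which is exactly the desired inequality $F_1(j) - F_1(i) \leq F_1(j-i)$, so actually only the bound $|v''|_1 \leq F_1(j-i)$ on the suffix part is really needed; the bound $|v'|_1 \leq F_1(i)$ is automatic but not even required in this direction since we can just use $|v'|_1 \leq F_1(i)$ trivially. The argument is genuinely short: the only content is the observation that any factor of length $j$ decomposes into a prefix of length $i$ and a suffix of length $j-i$, both of which are factors of $w$, combined with additivity of the $1$-count under concatenation.

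There is essentially no obstacle here; the one point requiring a word of care is the degenerate cases $i = 0$ (where $v'' = v$, $F_1(i) = 0$, and the inequality is trivial) and $i = j$ (where $v''$ is empty, $F_1(j-i) = F_1(0) = 0$, and the inequality reads $F_1(j) - F_1(j) \leq 0$). Both are handled uniformly by the convention that the empty word is a factor with $0$ ones. I would also note for completeness that the same statement holds verbatim for $F_0$ by symmetry (replacing $1$ by $0$ throughout), since the proof never uses any special property of the symbol $1$. This lemma is the subadditivity-type estimate that will later force the maximum-ones function of a prefix normal word to dominate appropriately, feeding into the proof of Theorem~\ref{thm:pnf}.
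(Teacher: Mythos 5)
Your proof is correct and follows essentially the same route as the paper: take a factor $v$ of length $j$ with $|v|_1 = F_1(j)$, split it as $v = v'v''$ with $|v'| = i$, and bound each piece by the definition of $F_1$ to get $F_1(j) \leq F_1(i) + F_1(j-i)$. (The only quibble is the aside claiming the bound $|v'|_1 \leq F_1(i)$ is ``not even required''---it is used, alongside $|v''|_1 \leq F_1(j-i)$, but since both hold trivially this does not affect the argument.)
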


\begin{proof}
Observe that if $v = yz$, then $|v|_1 \le F_1(|y|) + F_1(|z|)$.  Thus if $v$ is a length $j$ word such that $|v|_1 = F_1(j)$ and
$|y| = i$, then $F_1(j) \le F_1(i) + F_1(j-i)$. \qed
\end{proof}

\begin{theorem}\label{thm:pnf}
For every $w\in \Sigma^*$ there is a unique \pword{1} word $w'$ such that $\fmax_1(w') = \fmax_1(w)$;
similarly, there is a unique \pword{0} word $w''$ such that $\fmax_0(w'') = \fmax_0(w)$.
\end{theorem}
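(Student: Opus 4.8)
The plan is to construct $w'$ explicitly from the function $F_1(w)$ and then verify both that the construction yields a prefix normal word with the desired maximum-ones function, and that no other prefix normal word can have the same function. For the existence part, I would define a word $w'$ of length $n=|w|$ by reading off the consecutive differences of $F_1(w)$: set $w'_k = 1$ if $F_1(w,k) - F_1(w,k-1) = 1$ and $w'_k = 0$ if $F_1(w,k) - F_1(w,k-1) = 0$, for $1 \le k \le n$. This is well defined because $F_1(w)$ is nondecreasing with steps of size $0$ or $1$ (the step bound being an immediate consequence of Lemma~\ref{lemma:Fa} with $j-i=1$, since $F_1(1)\le 1$). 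By construction $\pr_1(w',k) = F_1(w,k)$ for all $k$, so in particular $\pr_1(w') = \pr_1(w,\cdot)$ as a function.

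The crux is then to show $\fmax_1(w') = F_1(w)$, which simultaneously gives that $w'$ is prefix normal (since its prefix function already equals $F_1(w)$, and $\fmax_1(w') \ge \pr_1(w',\cdot)$ always holds, we only need the reverse inequality $\fmax_1(w',k) \le \pr_1(w',k) = F_1(w,k)$). For this I would take any factor $v$ of $w'$ of length $k$, say $v = w'_{i+1}\cdots w'_{i+k}$; then $|v|_1 = \pr_1(w',i+k) - \pr_1(w',i) = F_1(w,i+k) - F_1(w,i) \le F_1(w,k)$, where the last inequality is exactly Lemma~\ref{lemma:Fa}. Hence $\fmax_1(w',k) \le F_1(w,k)$, and combined with the trivial direction we get $\fmax_1(w',k) = F_1(w,k) = F_1(w,k)$ for all $k$, so $w'$ is \pword{1} and \peq{1} to $w$. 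This is the step I expect to carry the weight of the argument, though here it reduces cleanly to the lemma already proved.

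For uniqueness, suppose $u$ is any \pword{1} word with $\fmax_1(u) = F_1(w)$. Because $u$ is prefix normal, $\pr_1(u,k) = \fmax_1(u,k) = F_1(w,k)$ for every $k$, so the number of $1$s in each prefix of $u$ is determined; taking successive differences forces $u_k = 1$ exactly when $F_1(w,k) - F_1(w,k-1) = 1$, i.e.\ $u = w'$. This shows the \pword{1} word in the statement is unique. The \pword{0} statement follows by the identical argument with the roles of $0$ and $1$ exchanged (equivalently, by applying the $1$-case to the complemented word), using the analogue of Lemma~\ref{lemma:Fa} for $F_0$, whose proof is verbatim the same. No serious obstacle is anticipated beyond making sure the indexing in the factor/prefix decomposition is stated carefully.
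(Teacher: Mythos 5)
Your proposal is correct and follows essentially the same route as the paper: the same construction of $w'$ from the increments of $F_1(w)$, the same use of Lemma~\ref{lemma:Fa} to bound $|v|_1$ for a factor $v$ of $w'$, and the same uniqueness argument via the fact that the prefix-ones function determines the word. (Only a harmless typo: ``$\pr_1(w') = \pr_1(w,\cdot)$'' should read ``$\pr_1(w',\cdot) = F_1(w,\cdot)$''.)
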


\begin{proof} We only give the proof for $w'$. The construction of $w''$ is analogous.

First note that if the \pword{1} words $u$ and $v$ are \peq{1}, then necessarily $u=v$. This holds because the prefix function $P_1$ determines the word, i.e.\ $P_1(u) = P_1(v)$ implies $u=v$ for {\em any} $u,v$. But since $u$ and $v$ are \pword{1} words, their prefix and maximum-ones functions coincide, and since they 
are \peq{1}, we have $P_1(u) = F_1(u) = F_1(v) = P_1(v)$. This proves uniqueness. 

Next, we will construct $w'$, given $w$. It is easy to see that for $1\le k\le |w|$, one has either $ \fmax_1(w,k)= \fmax_1(w,k-1)$ or $ \fmax_1(w,k)=1+ \fmax_1(w,k-1)$.
Now define the word $w'$ by

 $$w'_k = \begin{cases}
1 & \quad \text{if $ \fmax_1(w,k)=1+ \fmax_1(w,k-1)$}\\
0 & \quad \text{if $ \fmax_1(w,k)= \fmax_1(w,k-1)$}
\end{cases}$$
for every $1\le k\le |w|$.

By construction, we have $\pr_{1}(w',k) = \fmax_{1}(w,k)$ for every $1\le k\le |w|$. We still need to show that $\pr_1(w',k) = \fmax_1(w',k)$ for all $k$. This will prove that $w'$ is \pword{1}, as well as that it is \peq{1} to $w$. 

By definition, $\pr_1(w',k) \leq \fmax_1(w',k)$ for all $k$. Now let $v\in Fact(w')$, $|v|=k$, and $v = w_{i+1}\cdots w_j$. Then $|v|_1 = \pr_1(w',j) - \pr_1(w',i) = \fmax_1(w,j) - \fmax_1(w,i) \leq \fmax_1(w,j-i) = \pr_1(w',j-i) = \pr_1(w',k)$, where the inequality holds by Lemma~\ref{lemma:Fa}. 
We have thus proved that $ \fmax_1(w',k) \leq \pr_1(w',k)$, and hence $w'$ is \pword{1}.  
\qed
\end{proof}

\subsection{Normal forms and Parikh sets}

\begin{definition}[(Prefix) normal forms]\label{def:pnf}
Let $w\in \Sigma^*$. Then we denote by $\PNF_1(w)$ the unique \pword{1} word which is \peq{1} to $w$, and by $\PNF_0(w)$ the unique \pword{0} word which is \peq{0} to $w$. We refer to $\PNF_1(w)$ and $\PNF_0(w)$ as the {\em prefix normal form w.r.t.\ $1$ (resp.\ w.r.t.\ $0$)} or just {\em normal form w.r.t.\ $1$ (resp.\ w.r.t.\ $0$)} of $w$.
\end{definition}

\begin{example}
 Let  $w = 1010011011000111001011$.  The normal forms of $w$ are the words $$\PNF_{1}(w)=1110100110100101100101,$$  $$\PNF_{0}(w)=0001101010101101010111.$$
 
 \noindent Refer to Example~\ref{ex:1} for the values of the two functions $F_1(w)$ and $F_0(w)$. 
 \end{example}

\medskip

The operators $\PNF_1$ and $\PNF_0$ are idempotent operators; i.e., if $u = \PNF_x(w)$ then $\PNF_x(u) = u$, for $x=0,1$. 
This gives us an equivalent definition of prefix normality: a word $w$ is \pword{x} if $\PNF_x(w) = w$.  Also, for any $w\in \Sigma^{*}$ and $x\in \Sigma$, it holds that $\PNF_{x}(w)=\PNF_{x}(\tilde{w})$. Note further that if the equivalence class of $w$ contains only one element, then $w$ is necessarily prefix normal and a palindrome. In Table~\ref{table:classes4} we list all eight 1-prefix equivalence classes for words of length $4$.

\begin{table}[ht]
\begin{center}
\begin{small}
\begin{raggedright}
\begin{tabular}{l *{3}{@{\hspace{6mm}}c}}
$\PNF_{1}$   & Class  & Cardinality\\
\hline \rule[-2pt]{0pt}{3pt}\\
$1111$ & \{$1111$\} & 1\\
$1110$ & \{$1110$, $0111$\}& 2\\
$1101$ & \{$1101$, $1011$\}& 2\\
$1100$ & \{$1100$, $0110$, $0011$\}& 3\\
$1010$ & \{$1010$, $0101$\}& 2\\
$1001$ & \{$1001$\}& 1\\
$1000$ & \{$1000$, $0100$, $0010$, $0001$\}& 4\\
$0000$ & \{$0000$\}& 1\\
\hline \vspace{4mm}
\end{tabular}
\end{raggedright}
\caption{The sets of \peq{1} words of length $4$.\label{table:classes4}}
\end{small}
\end{center}
\end{table}

\medskip

The normal forms of a word allow us to determine the Parikh vectors of the factors of the word, as we will show in Theorem \ref{thm:charparset}. We first recall the following lemma from~\cite{CFL09} (which also appears to be folklore). We say that a Parikh vector $q$ {\em occurs} in a word $w$ if $w$ has a factor $v$ with $p(v) = q$.

\begin{lemma}[Interval Lemma~\cite{CFL09}]
\label{lemma:interval}
Let $w\in \Sigma^*$. Fix $1\leq k\leq |w|$.
If the Parikh vectors $(x_1,k-x_1)$ and $(x_2,k-x_2)$ both occur in $w$, then so does $(y,k-y)$ for any $x_1\leq y \leq x_2$.
\end{lemma}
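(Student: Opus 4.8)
The plan is to prove the Interval Lemma directly from the fact that moving a window of fixed length $k$ one position to the right can change the number of $1$s by at most one. First I would fix $1 \le k \le |w|$ and suppose the Parikh vectors $(x_1, k-x_1)$ and $(x_2, k-x_2)$ both occur, say in factors $w_{i+1}\cdots w_{i+k}$ and $w_{j+1}\cdots w_{j+k}$ respectively, with (without loss of generality) $x_1 \le x_2$ and $i \le j$. For each offset $\ell$ with $i \le \ell \le j$, let $g(\ell) = |w_{\ell+1}\cdots w_{\ell+k}|_1$, the number of $1$s in the length-$k$ window starting at position $\ell+1$. This is well defined since $\ell + k \le j + k \le |w|$.

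The key step is the observation that $|g(\ell+1) - g(\ell)| \le 1$ for all $\ell$ in the range: passing from the window $w_{\ell+1}\cdots w_{\ell+k}$ to $w_{\ell+2}\cdots w_{\ell+k+1}$ drops the single letter $w_{\ell+1}$ and adds the single letter $w_{\ell+k+1}$, so the count changes by $w_{\ell+k+1} - w_{\ell+1} \in \{-1, 0, 1\}$. Thus $g$ is a discrete function on the integer interval $[i,j]$ that changes by at most one at each step, with $g(i) = x_1$ and $g(j) = x_2$. By the discrete intermediate value property, $g$ takes every integer value between $x_1$ and $x_2$; hence for any $y$ with $x_1 \le y \le x_2$ there is some $\ell$ with $g(\ell) = y$, and the factor $w_{\ell+1}\cdots w_{\ell+k}$ witnesses that $(y, k-y)$ occurs in $w$.

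There is no real obstacle here; the only point requiring a little care is the boundary case where the two given occurrences overlap or coincide ($i = j$, forcing $x_1 = x_2$, so the statement is trivial) and ensuring all windows $g(\ell)$ stay inside $w$, which holds because $\ell \le j$ and $w_{j+1}\cdots w_{j+k}$ is assumed to be a factor. For completeness I would also state explicitly the discrete intermediate value claim — if $h : \{a, a+1, \dots, b\} \to \mathbb{Z}$ satisfies $|h(t+1) - h(t)| \le 1$ then the image of $h$ contains the full integer interval between $h(a)$ and $h(b)$ — which follows by an easy induction on $b - a$.
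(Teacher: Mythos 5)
Your proposal is correct and is precisely the ``simple sliding window argument'' that the paper itself invokes for this lemma: shifting a fixed-length window by one position changes the number of $1$s by at most one, and the discrete intermediate value property does the rest. You have simply written out in full the argument the paper leaves as a one-line sketch, including the boundary cases, so there is nothing to add.
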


The lemma can be proved with a simple sliding window argument, exploiting the fact that when a fixed size window is shifted by one, then the number of $1$s in the window changes by at most one.

\begin{theorem}\label{thm:charparset}
Let $w,w'$ be words over $\Sigma$. Then $\Pi(w) = \Pi(w')$ if and only if $\PNF_1(w) = \PNF_1(w')$ and $\PNF_0(w) = \PNF_0(w')$.
\end{theorem}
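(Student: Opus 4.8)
The plan is to prove both directions by relating the Parikh set $\Pi(w)$ to the pair of maximum-ones / maximum-zeros functions $(F_1(w), F_0(w))$, and then invoking the characterization of prefix normal forms already established. The key observation is that, by the Interval Lemma (Lemma~\ref{lemma:interval}), for each fixed length $k$ the set of Parikh vectors of factors of length $k$ occurring in $w$ is exactly the ``interval'' of vectors $(x, k-x)$ with $\fmin_1(w,k) \leq x \leq \fmax_1(w,k)$, where $\fmin_1(w,k)$ denotes the minimum number of $1$s over all length-$k$ factors. Hence $\Pi(w)$ is completely determined by the two functions $k \mapsto \fmax_1(w,k)$ and $k \mapsto \fmin_1(w,k)$, and conversely these two functions are recoverable from $\Pi(w)$. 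So the first step is to record this equivalence: $\Pi(w) = \Pi(w')$ if and only if $\fmax_1(w,k) = \fmax_1(w',k)$ and $\fmin_1(w,k) = \fmin_1(w',k)$ for all $0 \leq k \leq \min(|w|,|w'|)$ — with the caveat that $\Pi(w)$ also encodes $|w|$ (it contains $(|w|_0,|w|_1)$ as its unique vector of maximal total weight), so equality of Parikh sets forces $|w| = |w'|$, and then the functions agree on their whole common domain $\{0,\dots,|w|\}$.

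The second step is to translate the $\fmin$ condition into a $\fmax$ condition for the complementary symbol. A length-$k$ factor $v$ has $|v|_1 = k - |v|_0$, so minimizing $|v|_1$ over length-$k$ factors is the same as maximizing $|v|_0$; that is, $\fmin_1(w,k) = k - \fmax_0(w,k)$. Therefore the pair of functions $(\fmax_1(w), \fmin_1(w))$ carries exactly the same information as the pair $(\fmax_1(w), \fmax_0(w))$, and Step~1 upgrades to: $\Pi(w) = \Pi(w')$ iff $|w|=|w'|$ and $\fmax_1(w) = \fmax_1(w')$ and $\fmax_0(w) = \fmax_0(w')$ as functions on $\{0,\dots,|w|\}$.

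The final step is to connect this to the normal forms. By Definition~\ref{def:pnf} and Theorem~\ref{thm:pnf}, $\PNF_1(w)$ is the unique \pword{1} word whose $\fmax_1$ function equals that of $w$; in particular $w$ and $w'$ are \peq{1} — i.e.\ $\fmax_1(w) = \fmax_1(w')$ — if and only if $\PNF_1(w) = \PNF_1(w')$ (for the ``if'' direction use idempotence; for ``only if'' use that two \pword{1} words with equal $\fmax_1$ coincide, which is exactly the uniqueness part of Theorem~\ref{thm:pnf}). The same holds for $0$. Note that $\fmax_1(w)=\fmax_1(w')$ as functions already forces $|w| = |w'|$, since the domain of the function is $\{0,\dots,|w|\}$ and $\fmax_1$ is strictly increasing up to the point where it stabilizes — or more simply, $\PNF_1(w) = \PNF_1(w')$ means the two normal-form words are literally equal and hence have the same length, which equals $|w|$ and $|w'|$ respectively. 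Combining: $\Pi(w) = \Pi(w')$ $\iff$ ($\fmax_1(w) = \fmax_1(w')$ and $\fmax_0(w) = \fmax_0(w')$) $\iff$ ($\PNF_1(w) = \PNF_1(w')$ and $\PNF_0(w) = \PNF_0(w')$), which is the claim.

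I expect the only genuine subtlety to be bookkeeping about lengths and domains of the functions: one must be careful that ``$\Pi(w) = \Pi(w')$'' forces $|w| = |w'|$ before comparing functions pointwise, and that the Interval Lemma is being applied for every fixed $k$ separately so that no cross-length interaction is missed. The mathematical content — factor-length-$k$ Parikh vectors form an interval whose endpoints are $\fmax_1$ and $k - \fmax_0$ — is immediate from Lemma~\ref{lemma:interval}, and the passage to normal forms is a direct appeal to Theorem~\ref{thm:pnf}.
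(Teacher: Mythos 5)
Your proposal is correct and follows essentially the same route as the paper's own proof: apply the Interval Lemma for each length $k$ to see that $\Pi(w)$ is determined by the pair $(\fmax_1(w),\fmin_1(w))$, rewrite $\fmin_1(w,k)=k-\fmax_0(w,k)$, and then invoke the existence/uniqueness of the prefix normal forms from Theorem~\ref{thm:pnf}. Your extra bookkeeping that $\Pi(w)=\Pi(w')$ forces $|w|=|w'|$ is a welcome clarification the paper leaves implicit, but it does not change the argument.
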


\begin{proof}
Let $\fmin_1(w,k)$ denote the minimum number of $1$s in a factor of $w$ of
length $k$.
As a direct consequence of Lemma~\ref{lemma:interval}, we have that
for a Parikh vector $q=(x,y)$, $q\in \Pi(w)$ if and only if
$\fmin_1(w,x+y) \leq x \leq \fmax_1(w,x+y)$. Thus  for two words
$w,w'$, we have $\Pi(w) = \Pi(w')$ if and only if $\fmax_1(w) =
\fmax_1(w')$ and $\fmin_1(w) = \fmin_1(w')$.  It is easy to see that
for all $k$, $\fmin_1(w,k) = k - \fmax_0(w,k)$, thus the last
statement is equivalent to $\fmax_1(w) = \fmax_1(w')$ and $\fmax_0(w)
= \fmax_0(w')$. This holds if and only if $\PNF_1(w) = \PNF_1(w')$ and
$\PNF_0(w) = \PNF_0(w')$, and the claim is proved. \qed
\end{proof}

Define $I(w) = \{ (P_0(w,k),P_1(w,k)) \mid 0 \le k \le |w| \}$, the set of Parikh vectors
of all prefixes of $w$.  The following lemma is immediate.

\begin{lemma}
For all $w \in \Sigma^*$,
\[
\Pi(w) = \bigcup_{i=1}^{n} I(w_i \cdots w_n).
\]
\label{lemma:suffix}
\end{lemma}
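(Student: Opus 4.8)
The plan is to unfold the definitions of \emph{factor}, \emph{suffix}, and \emph{prefix} and observe that the claimed identity is essentially a restatement of the fact that a factor is a prefix of a suffix. First I would record that the suffixes of $w = w_1\cdots w_n$ are exactly the words $w_i\cdots w_n$ for $1\le i\le n$, together with the empty word $\epsilon$; hence
\[
\Fact(w) = \{\epsilon\} \cup \bigcup_{i=1}^n \{\, v : v \text{ is a prefix of } w_i\cdots w_n \,\}.
\]
Next I would apply the Parikh-vector map $\P$ to this equality. Since $\Pi(w) = \{\P(v) : v\in\Fact(w)\}$, this gives $\Pi(w) = \{\P(\epsilon)\}\cup \bigcup_{i=1}^n \{\P(v) : v \text{ a prefix of } w_i\cdots w_n\}$.

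The remaining step is to identify $\{\P(v) : v \text{ a prefix of } u\}$ with $I(u)$ for a fixed word $u$: the length-$k$ prefix of $u$ has Parikh vector $(P_0(u,k),P_1(u,k))$ by the very definition of $P_0$ and $P_1$, and letting $k$ range over $0,\dots,|u|$ lists precisely the Parikh vectors of the prefixes of $u$, i.e.\ the set $I(u)$. In particular the $k=0$ case shows $\P(\epsilon)=(0,0)\in I(u)$ for every $u$, so the stray term $\{\P(\epsilon)\}$ is absorbed into, say, $I(w_n)$. Substituting yields $\Pi(w) = \bigcup_{i=1}^n I(w_i\cdots w_n)$.

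I do not expect any real obstacle here: the statement is immediate once the definitions are spelled out, and the only subtlety worth flagging is the bookkeeping for the empty word --- it is a factor, so $(0,0)\in\Pi(w)$, yet $\epsilon$ is not one of the suffixes $w_i\cdots w_n$ indexing the union; this is harmless because $(0,0)$ is anyway the $k=0$ member of each $I(w_i\cdots w_n)$. If one preferred, the same argument could be organised as a pair of inclusions (for $\subseteq$: every factor is a prefix of some suffix; for $\supseteq$: every prefix of a suffix is a factor), but the displayed set identity handles both directions at once.
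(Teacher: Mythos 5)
Your proof is correct and is precisely the definitional unpacking the paper intends: the paper offers no proof at all, declaring the lemma ``immediate,'' and your argument (every factor is a prefix of a suffix, and $I(u)$ collects the Parikh vectors of the prefixes of $u$) is the canonical justification. The bookkeeping for the empty word is the only point requiring any care, and you handle it correctly by noting that $(0,0)$ is the $k=0$ member of every $I(w_i\cdots w_n)$.
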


There is an interesting geometrical way to view Lemma \ref{lemma:suffix} which we describe now.
Imagine each Parikh pair as the coordinates of a point in the Euclidean plane that has been rotated clockwise $\pi/4$ radians.
Each word $w$ can be interpreted as a polygonal path in this plane going up and to the right for each 1 ($\nearrow$) or
down and to the right for each 0 ($\searrow$), for each successive bit of $w$.  To obtain $\Pi(w)$ imagine grabbing
the polygonal path for $w$ and pulling it one step at a time through the origin, keeping track of the integer lattice
points that are hit after each pull (and ignoring the stuff to the left of the origin).
The normal forms $\PNF_{1}(w)$ and $\PNF_{0}(w)$ are obtained by forming polygonal paths starting at the origin, and
connecting the uppermost and the lowermost points of the region, respectively.

\vspace{-.2cm}
\begin{figure}[h]
\begin{center}
  \includegraphics[height=45mm]{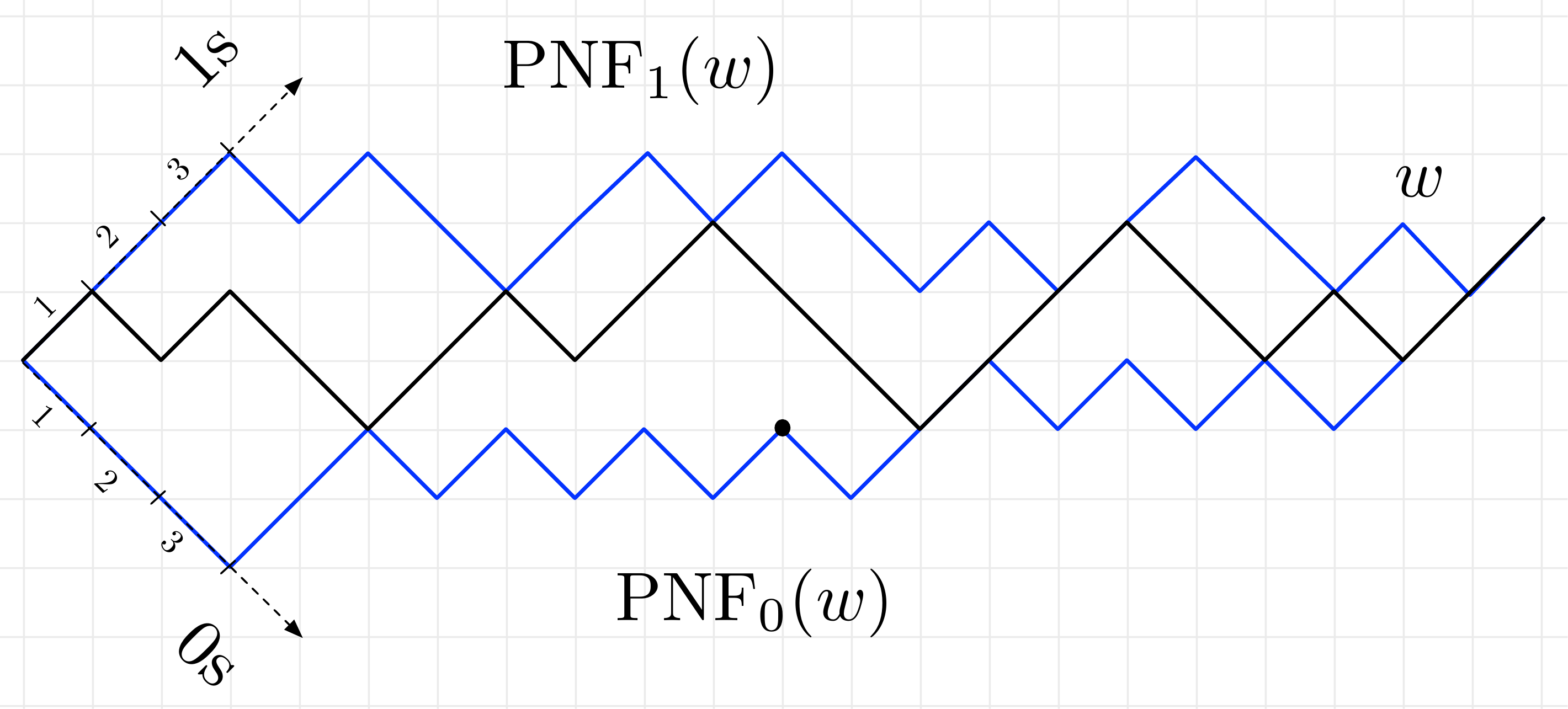}
  \caption{The word $w=1010011011000111001011$ (dark line), its normal forms $\PNF_{1}(w)=1110100110100101100101$ and $\PNF_{0}(w)=0001101010101101010111$ (lighter lines); the region between the two is the Parikh set of $w$; e.g.\  $w$ has a substring containing $5$ ones and $6$ zeros (black dot). Note that the axes giving the number of $0$s and $1$s are rotated by 45 degrees clockwise.}\label{fig:esempio}
\end{center}
\end{figure}

\subsection{Indexing for binary jumbled pattern matching}

Theorem~\ref{thm:charparset} is relevant for the problem known as Indexed Binary Jumbled Pattern Matching, which has attracted much interest recently. Recall that a Parikh vector over $\{0,1\}$ is a multiplicity vector of a string, i.e.\ it has non-negative integer entries. 

\begin{quote}{\sc Indexed Binary Jumbled Pattern Matching (IBJPM)}\\
Given a  string $w$ of length $n$ over $\{0,1\}$, create an index which allows fast answers to queries of the following form: \\
\textbf{Input:} a Parikh vector $q$, \\
\textbf{Output:} return {\bf yes} if $q$ occurs in $\Pi(w)$, and {\bf no} otherwise.
\end{quote}

For $1\leq k\leq n$, let $\fmin_1(w,k)$ be the minimum number of $1$s in a factor of length $k$, and $\fmax_1(w,k)$, as before, the maximum number of $1$s in a factor of length $k$. It follows from Lemma~\ref{lemma:interval} that the answer for query $q=(x,y)$ is {\bf yes} if and only if $\fmax_1(w,x+y) \geq x \geq \fmin_1(w,x+y)$. Therefore, it suffices to store, for every $1\leq k \leq n$, the two numbers $\fmax_1(w,k)$ and $\fmin_1(w,k)$, and queries can be answered in constant time. The size of this data structure is $O(n)$.

All current solutions for IBJPM are based on this observation. The crux is how to construct this linear size data structure. The construction time of the index has steadily decreased since its first introduction: from $O(n^2)$~\cite{CFL09} to $O(n^2/\log n)$~\cite{BCFL10,MR10}, to $O(n^2/\log^2 n)$ in the word RAM-model~\cite{MR12}, to $n^2/2^{\Omega(\log n / \log\log n)^{1/2}}$~\cite{HLRW14}. The fastest solution at present is due to Chan and Lewenstein and has running time $O(n^{1.859})$~\cite{CL15}.

Normal forms are in effect an encoding of this linear size index. We have already seen that the $\fmax$-function can be viewed as a binary string, namely $\PNF_1(w)$. 
We have observed in the proof of Theorem \ref{thm:charparset} how the function $\fmin_1(w)$ is determined by $F_0(w)$ and thus also by $\PNF_0(w)$, thus we have shown the following lemma.

\begin{lemma}
The answer for an IBJPM query $q=(x,y)$ is {\bf yes} if and only if $\pr_1(\PNF_1(w), x+y) \geq x \geq  \pr_1(\PNF_0(w), x+y)$.
\end{lemma}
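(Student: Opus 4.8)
The plan is to unwind the two already-established facts that connect the abstract functions to the normal forms. Recall from the proof of Theorem~\ref{thm:charparset} that the IBJPM answer for $q=(x,y)$ is \textbf{yes} precisely when $\fmin_1(w,x+y)\leq x\leq \fmax_1(w,x+y)$, so it suffices to show $\fmax_1(w,k)=\pr_1(\PNF_1(w),k)$ and $\fmin_1(w,k)=\pr_1(\PNF_0(w),k)$ for every $0\leq k\leq n$.

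The first identity is essentially immediate: by Definition~\ref{def:pnf}, $\PNF_1(w)$ is the unique \pword{1} word with $\fmax_1(\PNF_1(w))=\fmax_1(w)$, and for a \pword{1} word the prefix-ones function and the maximum-ones function coincide, so $\pr_1(\PNF_1(w),k)=\fmax_1(\PNF_1(w),k)=\fmax_1(w,k)$. For the second identity I would use the relation $\fmin_1(w,k)=k-\fmax_0(w,k)$ noted in the proof of Theorem~\ref{thm:charparset}, together with the fact that $\PNF_0(w)$ is the unique \pword{0} word with $\fmax_0(\PNF_0(w))=\fmax_0(w)$. Since $\PNF_0(w)$ is \pword{0}, its prefix-zeros function equals its maximum-zeros function, so $\pr_0(\PNF_0(w),k)=\fmax_0(w,k)$; then $\pr_1(\PNF_0(w),k)=k-\pr_0(\PNF_0(w),k)=k-\fmax_0(w,k)=\fmin_1(w,k)$. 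Chaining these two equalities into the characterization of the IBJPM answer yields the claim.

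There is no real obstacle here — the lemma is a restatement of observations already made in the proof of Theorem~\ref{thm:charparset}. The only point requiring a line of care is the bookkeeping identity $\fmin_1(w,k)=k-\fmax_0(w,k)$ (a length-$k$ factor with the fewest $1$s is the same as one with the most $0$s) and the analogous $\pr_1(u,k)=k-\pr_0(u,k)$ for any word $u$; both are trivial but should be stated so the substitution is transparent.
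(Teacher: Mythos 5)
Your proof is correct and follows essentially the same route as the paper, which derives the lemma directly from the observations in the proof of Theorem~\ref{thm:charparset}: the query characterization $\fmin_1(w,x+y)\leq x\leq \fmax_1(w,x+y)$, the identity $\pr_1(\PNF_1(w),k)=\fmax_1(w,k)$, and the reduction $\fmin_1(w,k)=k-\fmax_0(w,k)=\pr_1(\PNF_0(w),k)$. Your explicit spelling-out of the bookkeeping identities is a welcome bit of extra care, but nothing in substance differs.
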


Note that $\pr_1$ can be computed in constant time with constant time rank-queries on bit vectors, using only $o(n)$ bits of extra space~\cite{Munro96,Clark96}.

\begin{example}
Let $w=1001101$. Then the linear size data structure is given in the Table~\ref{table:lineards}, and the $F_1$ and $F_0$ functions in Table~\ref{table:lineards2}.

\begin{table}
\begin{center}
\begin{tabular}{c||cccccccc}
$k$ & 0 & 1 & 2 & 3 & 4 & 5 & 6 & 7 \\
\hline
$\fmax_1(w,k)$ &  0    & 1& 2& 2& 3& 3& 3& 4\\
$\fmin_1(w,k)$ &  0    & 0& 0& 1& 2& 2& 3& 4\\
\hline
\end{tabular}
\caption{The maximum and minimum number of $1$s for the word $w=1001101$.\label{table:lineards}}
\end{center}
\end{table}

\begin{table}
\begin{center}
\begin{tabular}{c||cccccccc}
$k$ & 0 & 1 & 2 & 3 & 4 & 5 & 6 & 7 \\
\hline
$F_1(w,k)$ &  0    & 1& 2& 2& 3& 3& 3& 4\\
$F_0(w,k)$ &  0    & 1& 2& 2& 2& 3& 3& 3\\
\hline
\end{tabular}
\caption{The maximum number of $1$s and $0$s for the word $w=1001101$. The normal forms of $w$ are $\PNF_1(w)=1101001$ and $\PNF_0(w) = 0011011$. \label{table:lineards2}}
\end{center}
\end{table}

\end{example}

At present, no faster computation of the normal forms is known than the algorithms cited above for the IBJPM problem. But the connection shown here implies that, should a fast normal form computation be found, it would immediately translate into a new solution for IBJPM.

\section{The language of prefix normal words}\label{sec:La}

In this section, we take a closer look at prefix normal words. We give several equivalent characterizations of prefix normality, explore some properties of prefix normal words, and then look at the language of prefix normal words. We denote by $\LPNone\subset \Sigma^*$ the language of \pword{1} words, and by $\LPNzero\subset \Sigma^*$ the language of \pword{0} words. Note that these are exactly complemented, i.e.\ replacing every $1$ by a $0$ and vice versa, in each word of $\LPNone$, yields $\LPNzero$. Therefore, every result about $\LPNone$ has an equivalent formulation for $\LPNzero$, as well. Recall that whenever not further specified, we refer to 1-prefix normality. In Section~\ref{sec:preneck} only, we will talk about \pword{0} words, and we will show that $\LPNzero$ is strictly contained in the language of pre-necklaces, when adopting the usual order $0<1$ on the alphabet.

\subsection{General observations about prefix normal words}

We start with several characterizations of prefix normal words.

\begin{proposition}\label{prop:char}
Let $w\in \Sigma^{*}$. The following properties are equivalent:

\begin{enumerate}
\item $w$ is a prefix normal word;
\item $\forall i,j$ where $0\le i\le j \le |w|$, we have $ \pr_{1}(j) -  \pr_{1}(i) \leq  \pr_{1}(j-i)$;
\item $\forall v\in \Fact(w)$ such that $|v|_{1}=i$, we have $|v|\ge  \pos_1(i)$;
\item $\forall i,j$ such that $i+j-1 \le |w|_{1}$, we have $ \pos_1(i) +  \pos_1(j) -1 \leq  \pos_1(i+j-1)$.
\end{enumerate}
\end{proposition}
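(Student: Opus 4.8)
The plan is to prove the equivalence of the four conditions by establishing a cycle of implications, grouping them according to which "language" they are phrased in: conditions (1)--(2) are about the prefix-sum (rank) function $P_1$, while conditions (3)--(4) are the "dual" statements about the position (select) function $\pos_1$. So I would show $(1)\Leftrightarrow(2)$ first, then $(3)\Leftrightarrow(4)$ by an analogous duality argument, and finally bridge the two halves with $(1)\Leftrightarrow(3)$.

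For $(1)\Rightarrow(2)$: if $w$ is prefix normal then $P_1(w)=F_1(w)$, and Lemma~\ref{lemma:Fa} applied to $F_1$ immediately gives $P_1(j)-P_1(i)=F_1(j)-F_1(i)\le F_1(j-i)=P_1(j-i)$. For $(2)\Rightarrow(1)$: given any factor $v=w_{i+1}\cdots w_j$ of length $k=j-i$, we have $|v|_1=P_1(j)-P_1(i)\le P_1(j-i)=P_1(k)=|w_1\cdots w_k|_1$, so the prefix of length $k$ has at least as many $1$s as $v$; since $v$ was arbitrary this says $F_1(k)=P_1(k)$ for all $k$, i.e.\ $w$ is prefix normal.

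For $(1)\Leftrightarrow(3)$: condition (3) says that every factor containing $i$ ones has length at least $\pos_1(i)$, which is the length of the shortest prefix containing $i$ ones. This is just a restatement of prefix normality read "the other way": $w$ is prefix normal iff no factor of length $k$ has more $1$s than the length-$k$ prefix, which is equivalent to saying no factor achieves a given count $i$ in fewer positions than the prefix does, i.e.\ $\pos_1(i)\le|v|$ whenever $|v|_1=i$. I would spell this out by noting $F_1(w,k)\le P_1(w,k)$ for all $k$ is equivalent to $\pos_1(w,i)\le\pos_1(\PNF\text{-witness})$... more cleanly: for a factor $v$ with $|v|_1=i$ and $|v|=k$, having $i\le P_1(k)$ (prefix normality) is the same as having $k\ge\min\{\ell:P_1(\ell)=i\}=\pos_1(i)$, using monotonicity of $P_1$. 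For $(3)\Leftrightarrow(4)$: $\pos_1(i)$ plays for $(3)$--$(4)$ the role $P_1$ plays for $(1)$--$(2)$, but note the indexing shift — a factor with $i$ ones starting at the first $1$-occurrence and a factor with $j$ ones sharing... the "$-1$" terms in (4) come from the fact that concatenating a factor ending in a $1$ with one beginning in a $1$ double-counts one position; I would prove (4) is the "select-version" of Lemma~\ref{lemma:Fa}, either directly from (3) by taking $v$ to be a shortest factor with $i+j-1$ ones and splitting it after its $i$-th one, or by translating (2) into select-language via the standard rank/select inverse relationship.

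The main obstacle I anticipate is getting the off-by-one bookkeeping in (4) exactly right — the "$i+j-1$" and the "$\pos_1(i)+\pos_1(j)-1$" encode the interaction between counting $1$s (a $0$-indexed additive quantity) and counting positions, and a naive translation of (2) will be off by one. The cleanest route is probably to avoid translating (2) directly and instead derive (4) from (3): take a factor $v$ realizing $\pos_1(i+j-1)$ as its length (the shortest factor with $i+j-1$ ones, which necessarily begins and ends with a $1$), write $v=v'v''$ where $v'$ ends at the $i$-th one of $v$ so $|v'|_1=i$ and $|v''|_1=j-1$, hence $|v''|$ extended by one more... — handling this split and the symmetric one carefully is the one place genuine care is needed; everything else is a direct unwinding of definitions.
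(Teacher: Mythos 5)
Your implication graph ($(1)\Leftrightarrow(2)$, $(1)\Leftrightarrow(3)$, $(3)\Leftrightarrow(4)$) differs superficially from the paper's cycle $(1)\Rightarrow(2)\Rightarrow(3)\Rightarrow(4)\Rightarrow(1)$, but the individual arguments you do give are essentially the same as the paper's: $(1)\Rightarrow(2)$ via Lemma~\ref{lemma:Fa}, the converse by reading a factor $w_{i+1}\cdots w_j$ off the rank differences, $(1)\Leftrightarrow(3)$ by monotonicity of $\pr_1$ together with $\pr_1(k)\ge i \iff k\ge\pos_1(i)$, and $(3)\Rightarrow(4)$ by applying (3) to the factor $w_{\pos_1(i)}\cdots w_{\pos_1(i+j-1)}$, which has exactly $j$ ones and length $\pos_1(i+j-1)-\pos_1(i)+1$. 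All of that is correct (modulo tightening the "shortest factor" phrasing: what you actually want is the \emph{prefix} of length $\pos_1(i+j-1)$, which by definition has $i+j-1$ ones and ends in a $1$; that its $i$-th one sits at position $\pos_1(i)$ is then immediate).

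The genuine gap is the return direction from (4): you never argue $(4)\Rightarrow(3)$ (or $(4)\Rightarrow(1)$), only $(3)\Rightarrow(4)$, and the alternative you gesture at --- "translating (2) into select-language via the standard rank/select inverse relationship" --- is precisely where the content lives, not a routine substitution. Without it, (4) is only shown to be a \emph{consequence} of the other conditions, not equivalent to them. The missing idea is an offset: given an arbitrary factor $v=w_{l+1}\cdots w_r$ with $|v|_1=i$, set $j=\pr_1(l)+1$, so that the ones of $v$ are the ones of $w$ numbered $j$ through $i+j-1$. Then $l<\pos_1(j)$ and $r\ge\pos_1(i+j-1)$, hence
\[
|v| \;=\; r-l \;\ge\; \pos_1(i+j-1)-\pos_1(j)+1 \;\ge\; \pos_1(i)
\]
by (4), which is exactly (3). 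This is not just "off-by-one bookkeeping": your $(3)\Rightarrow(4)$ split only ever produces factors that begin and end at occurrences of $1$, and the whole point of the converse is to control factors whose endpoints are arbitrary, which is what the strict inequality $l<\pos_1(j)$ absorbs. You correctly flagged this as the delicate spot but then deferred it; as written, the proof of the four-way equivalence is incomplete.
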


\begin{proof}

(1) $\Rightarrow$ (2). Follows from Lemma~\ref{lemma:Fa}, since $\pr_1(w) = \fmax_1(w)$.

(2) $\Rightarrow$ (3). Assume otherwise. Then there exists $v\in \Fact(w)$ s.t.\ $|v| <  \pos_1(k)$, where $k=|v|_1$. Let $v = w_{i+1}\cdots w_j$, thus $j-i=k$. Then $\pr_1(j) - \pr_1(i) = k$. But $\pr_1(j-i) = \pr_1(|v|) \leq k-1 < k = \pr_1(j) - \pr_1(i)$, a contradiction.

(3) $\Rightarrow$ (4). Again assume that the claim does not hold. Then there are $i,j$ s.t.\ $ \pos_1(i+j-1) <  \pos_1(i) +  \pos_1(j) -1$. Let $k= \pos_1(j)$ and $l =  \pos_1(i+j-1)$ and define $v = w_k\cdots w_l$. Then $v$ has $i$ many $1$s. But $|v| =  \pos_1(i+j-1) -  \pos_1(j) + 1 <  \pos_1(i) +  \pos_1(j) - 1 -  \pos_1(j) + 1 =  \pos_1(i)$, in contradiction to (3).

(4) $\Rightarrow$ (1). Let $v\in \Fact(w)$, $|v|_1 = i$. We have to show that $\pr_1(|v|) \geq i$. This is equivalent to showing that $ \pos_1(i) \leq |v|$. Let $v = w_{l+1} \cdots w_r$, thus $\pr_1(r) - \pr_1(l) = i$. Let $j = \pr_1(l)+1$, thus the first $1$ in $v$ is the $j$'th $1$ of $w$. Note that we have $l<  \pos_1(j)$ and  $r\geq  \pos_1(i+j-1)$. By the assumption, we have $ \pos_1(i) \leq  \pos_1(i+j-1) -  \pos_1(j) + 1 \leq r-l = |v|$. \qed
\end{proof}

Next we formulate a characterization of the prefix normal property that will be useful in the enumeration of fixed-length prefix normal words (Section~\ref{sec:enumeration}).

\begin{lemma}
\label{lemma:equivDefPn}
Let $w\in 1\Sigma^*$. For some sequence of positive integers $r_1$, $r_2$, $\ldots$, $r_{d-1}$ we can write $w=10^{r_1-1}10^{r_2-1}\cdots 10^{r_d-1}$.
The word $w$ is prefix normal if and only if the following inequalities hold.
\begin{equation*}
\begin{array}{rcll}
r_1 &\leq& r_j \qquad & j=2, 3, \ldots, d-1 \\
r_1+r_2 &\leq& r_j+r_{j+1} \qquad & j=2, 3, \ldots , d-2 \\
&\vdots&& \vdots\\
r_1+r_2+\cdots + r_{d-2} &\leq& r_j+r_{j+1}+\cdots + r_{d-1} \qquad & j=2
\end{array}
\end{equation*}
\end{lemma}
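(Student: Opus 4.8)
The plan is to translate the combinatorial inequalities into the language of the $\pos_1$ function and then invoke characterization (4) of Proposition~\ref{prop:char}. Writing $w = 10^{r_1-1}10^{r_2-1}\cdots 10^{r_d-1}$, the word $w$ has exactly $d$ ones (assuming $w \in 1\Sigma^*$ and reading the final block $0^{r_d-1}$ as a trailing run of zeros, possibly empty), and the position of the $k$th one is $\pos_1(k) = 1 + \sum_{t=1}^{k-1} r_t$ for $1 \le k \le d$. The key identity I would establish first is therefore: for any $1 \le a \le b \le d$,
\[
\pos_1(b) - \pos_1(a) = r_a + r_{a+1} + \cdots + r_{b-1}.
\]

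Next I would rewrite condition (4) of Proposition~\ref{prop:char}, namely $\pos_1(i) + \pos_1(j) - 1 \le \pos_1(i+j-1)$ for all $i,j \ge 1$ with $i + j - 1 \le |w|_1 = d$. Since $\pos_1(1) = 1$, the inequality is trivially an equality when $i = 1$ or $j = 1$, so we may assume $i, j \ge 2$. Using the identity above with the substitution $a = j$, $b = i + j - 1$ (valid since $j \ge 2$ and $i + j - 1 \le d$), the right-hand side becomes $\pos_1(i+j-1) = \pos_1(j) + (r_j + r_{j+1} + \cdots + r_{i+j-2})$, while the left-hand side, using $\pos_1(i) = 1 + (r_1 + \cdots + r_{i-1})$, becomes $\pos_1(j) + (r_1 + r_2 + \cdots + r_{i-1})$. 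Cancelling $\pos_1(j)$, condition (4) for this pair $(i,j)$ is exactly
\[
r_1 + r_2 + \cdots + r_{i-1} \le r_j + r_{j+1} + \cdots + r_{i+j-2}.
\]
Setting $m = i - 1 \ge 1$, this reads $r_1 + \cdots + r_m \le r_j + \cdots + r_{j+m-1}$, with the constraint $i + j - 1 \le d$, i.e. $j + m \le d$, i.e. $j + m - 1 \le d - 1$ so the right-hand sum is a genuine window of length $m$ inside $r_2, \ldots, r_{d-1}$ starting at index $j \ge 2$. Ranging over all admissible $m$ and $j$, this is precisely the displayed system of inequalities in the lemma: the $m$th row compares the prefix sum $r_1 + \cdots + r_m$ with each length-$m$ window $r_j + \cdots + r_{j+m-1}$ for $j = 2, \ldots, d-m$.

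The remaining care is bookkeeping at the boundaries. I would note that $m$ ranges from $1$ to $d - 2$ (when $m = d-1$ the only admissible $j$ is $j = 2$, forcing the window to be $r_2 + \cdots + r_d$, but $r_d$ does not appear — indeed $i + j - 1 \le d$ with $j = 2$ gives $i \le d - 1$, so $m = i - 1 \le d - 2$; thus the last row in the statement really is the $m = d-2$ row with $j = 2$, matching the display), and that the appearance of $r_d$ is immaterial since the trailing zeros never participate in any $\pos_1$ value that matters. I expect the main obstacle — really the only subtle point — to be getting these index ranges exactly right and confirming that the cases $i = 1$ or $j = 1$ (and the all-ones or single-one degenerate words) contribute no constraints, so that the finite system displayed is genuinely equivalent to the full infinite family in Proposition~\ref{prop:char}(4). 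Once the index arithmetic is pinned down, the equivalence is immediate in both directions.
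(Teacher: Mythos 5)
Your proof is correct and takes exactly the same route as the paper: the paper's own proof simply observes that $\pos_1(k) = 1+\sum_{j=1}^{k-1} r_j$ and then cites property (4) of Proposition~\ref{prop:char}, leaving all the index bookkeeping implicit. Your version fills in that arithmetic (the reduction of $\pos_1(i)+\pos_1(j)-1\le\pos_1(i+j-1)$ to $r_1+\cdots+r_m\le r_j+\cdots+r_{j+m-1}$, the triviality of the $i=1$ or $j=1$ cases, and the non-appearance of $r_d$) correctly and completely.
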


\begin{proof}
Note that for $k=1, 2, \ldots d-1$, we have $\pos_1(k) = 1+\sum_{j=1}^{k-1} r_j$. The statement of the lemma then follows by property (4) of Proposition \ref{prop:char}.
\qed
\end{proof}

We now give some simple facts about the language $\LPNone$.

\begin{proposition}\label{prop:La_properties} Let $\LPNone$ be the language of prefix normal words. \hfill \phantom{.}
\begin{enumerate}
\item $\LPNone$ is prefix-closed, that is, any prefix of a word in $\LPNone$ is a word in $\LPNone$.
\item If $w\in \LPNone$, then any word of the form $1^{k}w$ or $w0^{k}$, $k\ge 0$, also belongs to $\LPNone$.
\item Let $|w|_1 <3$. Then $w\in \LPNone$ iff either $w=0^n$ for some $n\ge 0$ or the first letter of $w$ is $1$.
\item Let $w\in \Sigma^{*}$. Then there exist infinitely many $v\in \Sigma^{*}$ such that $vw\in \LPNone$.
\end{enumerate}
\end{proposition}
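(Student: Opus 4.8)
The plan is to prove the four statements of Proposition~\ref{prop:La_properties} in order, each by a short direct argument, and to note that (4) follows easily once (2) is in hand. I will use the characterizations from Proposition~\ref{prop:char} freely, especially the $\pr_1$-based property~(2).

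For statement~(1), suppose $w \in \LPNone$ and $u$ is a prefix of $w$, say $w = uz$. To show $u \in \LPNone$ I use property~(2) of Proposition~\ref{prop:char}: for $0 \le i \le j \le |u|$ the prefix counts $\pr_1(u,\cdot)$ agree with $\pr_1(w,\cdot)$ on $\{0,1,\dots,|u|\}$, so the inequality $\pr_1(u,j) - \pr_1(u,i) \le \pr_1(u,j-i)$ is just a restriction of the inequality already known for $w$. Hence $u$ is prefix normal.

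For statement~(2), first consider $w 0^k$. A factor of $w0^k$ is either a factor of $w$ (handled by normality of $w$) or has the form $v0^m$ where $v$ is a suffix of a factor of $w$; since appending zeros does not increase the number of $1$s, $|v0^m|_1 = |v|_1 \le \pr_1(w,|v|) \le \pr_1(w,|v|+m) = \pr_1(w0^k, |v0^m|)$, using that $\LPNone$ is prefix-closed (statement~1) applied to the prefix structure, or more directly that $\pr_1$ is nondecreasing. So $w0^k$ is prefix normal. For $1^k w$: writing $w' = 1^k w$, note $\pr_1(w',\ell) = \min(\ell,k) + \pr_1(w,\max(0,\ell-k))$. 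Any factor $v$ of $w'$ splits as $v = 1^a v'$ with $v'$ a factor of $w$ (with $a$ possibly $0$), and $|v|_1 = a + |v'|_1 \le a + \pr_1(w,|v'|) \le \pr_1(w', a + |v'|) = \pr_1(w',|v|)$, where the middle step uses $a \le k$ (since the leading block of $w'$ has length at least $k$) together with the explicit formula for $\pr_1(w',\cdot)$. Thus $1^k w$ is prefix normal. Statement~(3) is a small case analysis: if $|w|_1 = 0$ then $w = 0^n$ and is trivially in $\LPNone$; if $|w|_1 \in \{1,2\}$ and $w$ starts with $1$, one checks that a factor can contain at most $|w|_1$ ones, and the prefix of the corresponding length already contains that many ones because the $1$s of $w$ all appear at or before the positions forced by $\pos_1(1) = 1$ and the normality condition collapses to $\pos_1(1)+\pos_1(1)-1 \le \pos_1(2)$, i.e. $1 \le \pos_1(2)$, which is automatic — so property~(4) of Proposition~\ref{prop:char} holds vacuously or trivially. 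Conversely, if $0 < |w|_1$ and $w$ starts with $0$, then $w_1 = 0$ gives $\pr_1(w,1) = 0$, while $w$ contains the factor $1$, so $\fmax_1(w,1) = 1 > 0 = \pr_1(w,1)$ and $w$ is not prefix normal.

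Finally, statement~(4): given any $w \in \Sigma^*$, pick $k \ge |w|$ (any such $k$ works, and there are infinitely many). Then $1^k w$ is prefix normal: this is the content of the $1^k w$ case of statement~(2) applied to the empty word, or one argues directly that for $k \ge |w|$ every factor $v$ of $1^k w$ satisfies $|v|_1 \le |w| \le k \le \pr_1(1^k w, k) \le \pr_1(1^k w, |v|)$ whenever $|v| \ge k$, and for $|v| < k$ the factor is either all $1$s (fine) or straddles into $w$ with at most $|v|$ ones, again dominated by $\pr_1(1^k w,|v|) = |v|$. Either way $1^k w \in \LPNone$ for every $k \ge |w|$, giving infinitely many prefixes $v = 1^k$ with $vw \in \LPNone$.

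I expect the only mildly delicate point to be the $1^k w$ part of statement~(2): one must be careful that a factor of $1^k w$ that begins inside the $1$-block and extends into $w$ has at most as many $1$s as the prefix of the same length, which is where the bound $a \le k$ on the number of leading $1$s of the factor is used together with the fact that the prefix of length $a + |v'|$ of $1^k w$ contains exactly $a + \pr_1(w,|v'|) \ge a + |v'|_1$ ones. Everything else is bookkeeping with the prefix-count inequality of Proposition~\ref{prop:char}(2).
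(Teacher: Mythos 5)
Your overall plan matches the paper's: claims (1)--(3) are routine consequences of the definitions and of Proposition~\ref{prop:char}, and (4) rests on the single fact that $1^kw$ is prefix normal for every $k\ge|w|$ and \emph{arbitrary} $w$, which is exactly the paper's one-line argument. Your treatments of (1), (2) and (3) are correct in substance, though two details are stated inaccurately: in the summary of the $1^kw$ case of (2), the prefix of length $a+|v'|$ does \emph{not} in general contain ``exactly $a+\pr_1(w,|v'|)$'' ones (it may contain more, when that prefix lies inside the $1$-block), and in (3) the quoted inequality $\pos_1(1)+\pos_1(1)-1\le\pos_1(2)$ is not one of the conditions of Proposition~\ref{prop:char}(4) (the relevant ones all reduce to $\pos_1(1)\le 1$). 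In both places the inequality you actually need does hold, so these are slips of exposition rather than errors of substance.

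The genuine gap is in (4): neither of your two justifications for ``$1^kw\in\LPNone$ for all $k\ge|w|$'' works as written. You cannot get it from statement (2), because the hypothesis of (2) is that the word being prepended to is already prefix normal; applying (2) to the empty word only yields $1^k\in\LPNone$ and says nothing about $1^kw$ for arbitrary $w$. And the direct chain ``$|v|_1\le|w|\le k\le\pr_1(1^kw,k)$'' for factors with $|v|\ge k$ starts from a false inequality: take $w=10$ and $k=3$, so $1^kw=11110$; the factor $v=1110$ (positions $2$--$5$) has $|v|_1=3>2=|w|$ and $|v|=4\ge k$. The claim itself is true and the repair is short. For $|v|=\ell\le k$ one has $\pr_1(1^kw,\ell)=\ell\ge|v|_1$. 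For $\ell>k$, any occurrence of $v$ starts at a position $s\le (k+|w|)-\ell+1\le|w|\le k$, so $v$ covers all of positions $k+1,\dots,\ell$ of $1^kw$, i.e.\ it contains the entire prefix of $w$ of length $\ell-k$ and in particular all of that prefix's $0$s; hence $|v|_0\ge(\ell-k)-\pr_1(w,\ell-k)$ and therefore $|v|_1=\ell-|v|_0\le k+\pr_1(w,\ell-k)=\pr_1(1^kw,\ell)$. With this patch your argument for (4), and hence the whole proof, is complete.
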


\begin{proof}
The claims {\em 1., 2., 3.} follow easily from the definition. For {\em 4.},  note that for any $n\ge |w|$, the word $1^{n}w$ belongs to $\LPNone$. \qed
\end{proof}

We now deal with the question of how a prefix normal word can be extended to the right into another prefix normal word.

\begin{lemma}\label{lemma:test}
Let $w\in \LPNone$. Then $w1\in \LPNone$ if and only if for every $0\le k<|w|$ the suffix of $w$ of length $k$ has less $1$s than the prefix of $w$ of length $k+1$.
\end{lemma}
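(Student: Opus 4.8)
The plan is to prove both directions via the extension of the prefix normal property from $w$ to $w1$. Writing $n=|w|$, the word $w1$ is prefix normal if and only if $\fmax_1(w1,k)=\pr_1(w1,k)$ for all $0\le k\le n+1$. Since $w$ is already prefix normal, for every $k$ the only factors of $w1$ that are not factors of $w$ are the suffixes of $w1$, i.e.\ the words of the form $u1$ where $u$ is a suffix of $w$ (together with the length-$(n+1)$ factor $w1$ itself, which is the prefix). So the condition fails exactly when some suffix $u1$ of $w1$ has strictly more $1$s than the prefix of $w1$ of length $|u1|$. This reduces the statement to controlling these suffixes.

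First I would make precise the bookkeeping. Let $u$ be the suffix of $w$ of length $k$ for some $0\le k\le n$; then $u1$ is a suffix of $w1$ of length $k+1$, and $|u1|_1=|u|_1+1$. For $k<n$ the prefix of $w1$ of length $k+1$ is just the prefix of $w$ of length $k+1$, which has $\pr_1(w,k+1)$ ones; for $k=n$ the suffix $u1$ equals $w1$ itself and imposes no new constraint. Hence the violating configurations are precisely: there exists $0\le k<n$ with $|u|_1+1>\pr_1(w,k+1)$, where $u$ is the length-$k$ suffix of $w$; equivalently $|u|_1\ge\pr_1(w,k+1)$, i.e.\ the length-$k$ suffix of $w$ has at least as many $1$s as the length-$(k+1)$ prefix of $w$. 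Negating, $w1\in\LPNone$ iff for every $0\le k<n$ the length-$k$ suffix of $w$ has strictly fewer $1$s than the length-$(k+1)$ prefix of $w$, which is exactly the claimed condition.

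For the forward direction ($w1\in\LPNone$ implies the suffix condition) I would argue by contraposition: if some length-$k$ suffix $u$ of $w$ (with $k<n$) had $|u|_1\ge\pr_1(w,k+1)$, then $u1$ is a factor of $w1$ of length $k+1$ with $|u1|_1\ge\pr_1(w,k+1)+1>\pr_1(w1,k+1)$, contradicting prefix normality of $w1$. For the converse, assume the suffix condition and take any factor $v$ of $w1$; if $v$ is a factor of $w$ then $|v|_1\le\pr_1(w,|v|)=\pr_1(w1,|v|)$ since $w$ is prefix normal, and if $v$ is a suffix of $w1$ it has the form $u1$ with $u$ a suffix of $w$ of some length $k\le n$: for $k=n$ we get $v=w1$, a prefix, so $|v|_1=\pr_1(w1,n+1)$; for $k<n$ the hypothesis gives $|u|_1<\pr_1(w,k+1)$, hence $|v|_1=|u|_1+1\le\pr_1(w,k+1)=\pr_1(w1,k+1)=\pr_1(w1,|v|)$. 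In all cases $|v|_1\le\pr_1(w1,|v|)$, so $w1\in\LPNone$.

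The only subtle point — and the place I would be most careful — is the clean separation of the factors of $w1$ into "factors of $w$" and "suffixes of $w1$", and in particular handling the borderline case $k=n$ where the new factor is the whole word $w1$ and coincides with the length-$(n+1)$ prefix, so it never causes a violation; this is why the condition is quantified only over $k<|w|$. Everything else is the routine translation between the $\fmax_1$/$\pr_1$ language and the "suffix has fewer $1$s than prefix" formulation. No induction or appeal to Lemma~\ref{lemma:Fa} is needed here; the argument is a direct unfolding of the definition of prefix normality applied to $w1$.
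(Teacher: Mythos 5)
Your proof is correct and follows essentially the same route as the paper's: both directions reduce to splitting the factors of $w1$ into factors of $w$ and suffixes of the form $u1$ with $u$ a suffix of $w$, and using that $\pr_1(w1,k)=\pr_1(w,k)$ for $k\le|w|$. Your explicit treatment of the borderline case $k=|w|$ (where the new factor is $w1$ itself and hence never violates prefix normality) is a small point the paper's proof leaves implicit, but otherwise the arguments coincide.
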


\begin{proof} Note that for all $1\leq k\leq |w|$, $P_1(w1,k)=P_1(w,k)$. Now if $w1\in \LPNone$, then for the $k$-length suffix $u$ of $w$: $|u|_1 < |u1|_1 \leq P_1(w1,k+1) = P_1(w,k+1)$. Conversely, let $u$ be a factor of $w1$. If  $u$ is a factor of $w$, then $|u|_1\leq P_1(w,|u|)=P_1(w1,|u|)$. Else $u=u'1$, with $u'$ a suffix of $w$, and $|u|_1 = |u'|_1+1 < P_1(w,|u'|+1)+1 = P_1(w1,|u|)+1 = P_1(w1,|u|)+1$, and thus $|u|_1 \leq P_1(w1,|u|)$. Therefore, $w1\in\LPNone$.
\qed
\end{proof}

We close this section by proving that $\LPNone$ is not context-free.

\begin{theorem}\label{teor:CF}
$\LPNone$ is not context-free.
\end{theorem}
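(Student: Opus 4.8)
\textbf{Proof proposal for Theorem \ref{teor:CF}.}

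The plan is to use the pumping lemma for context-free languages. I would start by assuming that $\LPNone$ is context-free, so there is a pumping length $p$. The idea is to find a single long word $z \in \LPNone$ with the property that every legal pumping decomposition $z = uvwxy$ (with $|vwx| \le p$, $|vx| \ge 1$) fails for some iterate $uv^i w x^i y$, that is, produces either a word not in $\LPNone$ or (if we prefer to keep both pumped factors growing inside a block of $0$s) a word whose prefix-normality is destroyed. The natural candidate for $z$ is something like $1^p 0^p 1^p$, or more robustly a word where a long run of $1$s is the unique prefix that "supports" a later block of $1$s; a block-structured word in the spirit of Lemma~\ref{lemma:equivDefPn}, say $z = 1^{p} 0^{p} 1^{p} 0$ or $z = 1^{a}0^{b}1^{a}$ with $a = p$ and $b$ suitably chosen, so that pumping in the first $1$-block or in the last $1$-block immediately breaks one of the inequalities of Lemma~\ref{lemma:equivDefPn}, while pumping that only touches the middle $0$-block changes the gap structure between the two $1$-blocks and again violates an inequality. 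One must check the case analysis on which blocks $v$ and $x$ can overlap, using $|vwx| \le p$ to limit how many blocks are involved.

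First I would record two easy facts already available: by Proposition~\ref{prop:La_properties}(1) the language is prefix-closed, and Lemma~\ref{lemma:equivDefPn} gives a clean arithmetic test for prefix-normality of a word in $1\Sigma^*$ in terms of the run-length sequence $r_1, \dots, r_{d-1}$. Then I would fix $z = 1^{p}0^{p}1^{p}0^{p}$ (or a minor variant), verify $z \in \LPNone$ directly from Lemma~\ref{lemma:equivDefPn} (here $r_1 = 1$ for the first $p-1$ positions inside the leading $1$-block, so the smallest partial sums start on the left and the inequalities hold comfortably because the first block of ones contributes the smallest gaps), and then take any decomposition $z = uvwxy$ with $|vwx|\le p$, $|vx|\ge 1$. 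Because $|vwx|\le p$, the factor $vwx$ is contained in the concatenation of at most two consecutive blocks among the four. I would then go through the possibilities: if $vx$ contains at least one $1$ from one of the $1$-blocks, pumping up increases the number of $1$s in that block while leaving some other block unchanged, and I show this makes the prefix of the appropriate length have too few $1$s relative to some factor — i.e. a violation of the defining property or of the relevant inequality in Lemma~\ref{lemma:equivDefPn}. If $vx$ consists only of $0$s, then it lies in a single $0$-block (again by the length bound), and pumping changes exactly one of the run-lengths flanking a $1$; increasing (or decreasing, via $i=0$) that $0$-run changes the balance between the left pair of gaps and the right pair of gaps and I again hit a violated inequality. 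In every case $uv^i w x^i y \notin \LPNone$ for a suitable $i \in \{0, 2\}$, contradicting the pumping lemma.

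The main obstacle I expect is the bookkeeping in the case analysis: because pumping changes run-lengths rather than simply inserting a monolithic block, I need the chosen witness $z$ to be "tight" enough that \emph{any} perturbation of \emph{any} single block breaks prefix-normality, yet still be prefix normal to begin with. Choosing the block lengths so that the word sits exactly on the boundary of the inequalities in Lemma~\ref{lemma:equivDefPn} (equality in several of them) is the delicate part — one wants each relevant inequality to be an equality so that increasing a left-side term or decreasing a right-side term is immediately fatal, but one must double-check that the word is still in $\LPNone$ and that the $i=0$ direction (which shrinks a block, possibly destroying the block structure or lowering $d$) is handled correctly, e.g. by only ever needing $i=2$ when $vx$ touches a $1$-block. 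I would also make sure the argument degrades gracefully if $vwx$ straddles a block boundary, since then pumping alters two adjacent run-lengths simultaneously; here too the length restriction $|vwx|\le p$ keeps the interaction local to two neighbouring blocks and the same inequalities apply.
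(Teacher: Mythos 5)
Your plan has a genuine gap: none of the witness words you propose actually works, because you never force the pumped part $vx$ to contain a $1$, and for your candidates an adversary can place $vwx$ entirely inside a block of $0$s where pumping in \emph{both} directions is harmless. Concretely, $\LPNone$ is prefix-closed and closed under appending $0$s (Proposition~\ref{prop:La_properties}), so pumping the trailing $0$-block of $1^p0^p1^p0^p$ never leaves $\LPNone$; worse, one checks directly that $1^p0^q1^p$ is prefix normal for \emph{every} $q\ge 0$ (for $k\le p+q$ both $F_1(k)$ and $P_1(k)$ equal $\min(k,p)$, and for $k>p+q$ both equal $k-q$), so pumping the middle $0$-block of $1^p0^p1^p$ or of $1^p0^p1^p0^p$ up or down also never leaves $\LPNone$. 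Your remark that changing a single $0$-run ``changes the balance between the left pair of gaps and the right pair of gaps'' and violates an inequality of Lemma~\ref{lemma:equivDefPn} is exactly where the argument breaks: increasing a run $r_j$ with large $j$ only increases right-hand sides of those inequalities, and the last run does not appear in them at all. You flag this as ``the delicate part'' but do not resolve it, and resolving it is the whole content of the proof.

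The paper's proof eliminates this difficulty with one move you are missing: since context-free languages are closed under intersection with regular languages, it first intersects $\LPNone$ with $1^*01^*01^*$ and pumps $z=1^n01^n01^n$ inside that intersection. Now $vx$ cannot contain a $0$ (pumping would change the number of $0$s and exit the regular language), so the only cases are which $1$-blocks $vx$ meets, and each is killed by $i=0$ or $i=2$. If you want to avoid the intersection trick, the standard substitute is Ogden's lemma with only the $1$s marked (forcing $vx$ to contain a marked symbol), or a much more carefully chosen witness such as $(10)^k1$ in which every run-length perturbation is fatal in at least one pumping direction -- but that requires a substantially heavier case analysis than your sketch acknowledges, since $v$ and $x$ may each straddle block boundaries.
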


\begin{proof}
Recall that the intersection of a CFL with a regular language is a CFL.
We will show that $L' = \LPNone \cap 1^* 0 1^* 0 1^*$ is not a CFL by using the pumping lemma.
Let $n$ be the constant of the pumping lemma and let $z = 1^n 0 1^n 0 1^n \in L'$.
Let $z = uvwxy$ be the usual factorization of the pumping lemma, where we may assume that
$|vx| \ge 1$, $|vwx| \le n$, and for all $i \ge 0$ we have $uv^iwx^iy \in L'$.
Clearly $vx$ cannot contain 0s.  If $vx$ contains some 1s from the first block of 1s in $z$, then
taking $i = 0$  give a contradiction since the third block of 1s is too long.
If $vx$ contains no 1s from the first block of 1s then taking $i = 2$ makes the second or
third block of 1s too long.  \qed
\end{proof}

\subsection{Connection with Lyndon words and pre-necklaces}\label{sec:preneck}

In this section we explore the relationship between the language $\LPNzero$ of prefix normal words w.r.t.\ $0$ and some known classes of words defined by means of lexicographic properties.  Note that in this section, when referring to prefix normality, we mean with respect to $0$. We assume the usual order $0<1$ on the alphabet.

A \emph{Lyndon word} is a word which is lexicographically strictly smaller than any of its proper non-empty suffixes. Equivalently, $w$ is a Lyndon word if it is the strictly smallest, in the lexicographic order, among its conjugates, i.e., for any factorization $w=uv$, with $u,v$ non-empty words, one has that the word $vu$ is lexicographically greater than $w$ \cite{LothaireAlg}. A word $w$ is a {\em power} if it can be obtained by concatenating two or more copies of another word, i.e.\ if there exists a non-empty $v$ and a $k>1$ such that $w=v^k$. A word that is not a power is called {\em primitive}. Note that, by definition, a Lyndon word is primitive. Let us denote by $Lyn$ the set of Lyndon words over $\Sigma$. One has that $Lyn\not\subseteq  \LPNzero$ and $\LPNzero \not\subseteq Lyn$. For example, the word $w=0101$ belongs to $\LPNzero$ but is not a Lyndon word since it is not primitive. An example of a Lyndon word which is not in normal form is $w=00110100111$.

A necklace is a Lyndon word or a power of a Lyndon word. A {\em pre-necklace} is a prefix of a necklace~\cite{Ruskey92} (also called  \emph{preprime word} \cite{Knuth42}, or \emph{sesquipower} or \emph{fractional power} of a Lyndon word \cite{ChHaPe04}). Let us denote by $\PL$ the language of pre-necklaces. 
The next proposition shows that every prefix normal word different from a power of the letter $1$ is a prefix of a Lyndon word.

\begin{proposition}\label{prop:preflin}
Let $w\in \LPNzero$ with $|w|_{0}>0$. Then the word $w1^{|w|}$ is a Lyndon word.
\end{proposition}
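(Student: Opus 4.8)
The plan is to show that $w1^{|w|}$ is strictly smaller, in the lexicographic order, than each of its proper non-empty suffixes. Since $|w|_0>0$ and $w$ is prefix normal w.r.t.\ $0$, every prefix normal word with at least one $0$ begins with $0$ (otherwise the factor $0$ would have more $0$s than the length-$1$ prefix); in particular $w1^{|w|}$ starts with $0$. A suffix of $w1^{|w|}$ is of one of two types: a word of the form $1^k$ with $0\le k\le |w|$ (lying entirely inside the appended block, or consisting of the final $1$s of $w$ together with the block), or a word of the form $v1^{|w|}$ where $v$ is a proper non-empty suffix of $w$. For the first type, any such $1^k$ starts with $1>0$, hence is lexicographically greater than $w1^{|w|}$, unless $k=0$; but $k=0$ gives the empty suffix, which is excluded. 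So the real work is the second type.

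For a proper non-empty suffix $v$ of $w$, write $s=v1^{|w|}$ and compare it with $t=w1^{|w|}$. I would compare letter by letter. Let $\ell=|v|<|w|$. If in the first $\ell$ positions $t$ and $s$ already differ, say at position $p\le\ell$, then I claim $t_p<s_p$: indeed $s_p$ is the $p$-th letter of $v$, i.e.\ the $(|w|-\ell+p)$-th letter of $w$, while $t_p=w_p$; the key point (this is where prefix normality enters) is that the length-$\ell$ suffix $v$ of $w$ cannot have a prefix with strictly more $0$s than the corresponding prefix of $w$ — more precisely, for every $j$, $\pr_0(w,j)\ge$ the number of $0$s in the length-$j$ prefix of $v$ — so at the first position of disagreement $w$ must carry the $0$ and $v$ the $1$, giving $t_p=0<1=s_p$. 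If instead $t$ and $s$ agree on all of the first $\ell$ positions, then $v=w_1\cdots w_\ell$, i.e.\ $w$ has a period; comparing position $\ell+1$ onward, $t$ continues with $w_{\ell+1}\cdots w_{|w|}1^{|w|}$ while $s$ continues with $1^{|w|}$, and since $w$ has a $0$, the string $w_{\ell+1}\cdots w_{|w|}1^{|w|}$ contains a $0$ among its first $|w|-\ell$ letters (because $w_{\ell+1}\cdots w_{|w|}$ is a nonempty suffix of $w$ and, being a prefix-normal-derived fact, $w$'s $0$s cannot all be pushed into $w_1\cdots w_\ell$; alternatively one argues that $w=w_1\cdots w_\ell$ being a proper period of a word starting with $0$ forces a $0$ later), so again $t$ carries a $0$ where $s$ carries a $1$. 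Either way $t<s$.

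The cleanest way to package the middle argument is to prove a small auxiliary fact first: if $w\in\LPNzero$ and $u$ is any factor of $w$, then for every $j\le|u|$ the number of $0$s in the length-$j$ prefix of $u$ is at most $\pr_0(w,j)$. This is just the definition of $0$-prefix-normality applied to the factor consisting of the first $j$ letters of $u$ (its number of $0$s is $\le F_0(w,j)=\pr_0(w,j)$). Strictness at the first disagreement then comes from the fact that up to that point the $0$-counts of the two prefixes were equal, and at that position $w$ either stays at the same count or increases it while $v$ does the opposite of whatever $w$ does; since the count of $v$'s prefix can never exceed $w$'s, $w$ must be the one taking the $0$.

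\textbf{Main obstacle.} The routine cases (suffixes inside the $1$-block, and first-disagreement-within-$v$) are straightforward once the auxiliary $0$-count inequality is in hand. The one case that needs a little care is when $v$ is a proper suffix of $w$ that happens to \emph{equal} the corresponding prefix of $w$, so that $w$ has a period and the comparison is only resolved beyond position $|v|$; there I must be sure that the appended $1^{|w|}$ block in $t$ does not get compared against genuine letters of $w$ in a way that reverses the inequality, and that a $0$ really does appear in $w_{|v|+1}\cdots w_{|w|}$. This follows because $w$ begins with $0$ and $|w_{|v|+1}\cdots w_{|w|}| = |w|-|v|\ge 1$, and a word that starts with $0$ and has period $|v|$ has a $0$ in every window of length $|v|$ aligned with a full period — in particular, if $|v|\le |w|-|v|$ the tail contains a full period and hence a $0$; if $|v|>|w|-|v|$, the tail $w_{|v|+1}\cdots w_{|w|}$ is a prefix of $w$ (by the periodicity) of positive length, hence starts with $0$. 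So in all subcases $t$ carries a $0$ against $s$'s $1$, and $w1^{|w|}$ is Lyndon.
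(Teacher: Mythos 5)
Your proof is correct and follows essentially the same route as the paper's: both reduce to a first-disagreement lexicographic comparison in which $0$-prefix-normality forces $w$ (rather than its suffix) to carry the $0$, and both handle the border/periodicity case separately (the paper phrases everything via rotations and shows the border case cannot occur; you phrase it via suffixes and continue the comparison past the border). One detail needs repair: when the suffix $v$ of length $\ell$ equals the prefix $w_1\cdots w_\ell$, the induced period of $w$ is $|w|-\ell$, not $|v|=\ell$, and your claim that for $|v|>|w|-|v|$ the tail $w_{\ell+1}\cdots w_{|w|}$ is a prefix of $w$ is false in general — e.g.\ $w=00100100\in\LPNzero$ with $v=00100$ gives the tail $100$. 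What you actually need, namely that this tail contains a $0$, does hold: the tail is a window of length exactly the period $p=|w|-\ell$, hence its positions meet every residue class modulo $p$, in particular a position congruent to $1$, which by periodicity carries $w_1=0$.
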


\begin{proof}
We have to prove that every rotation of $w' = w1^{|w|}$ is strictly greater than $w'$. If the rotation starts at a position within the second half of $w'$, then this is clearly true, since then its first character is $1$, while $w'$ starts with a $0$, $w$ being a prefix normal word containing at least one $0$. So let $v$ be a suffix of $w'$ of length at least $|w|+1$, and let $u$ be the longest common prefix of $v$ and $w'$. If $u=v$, then $v$ is a border (both a prefix and suffix) of $w'$, of length more than half its length, and thus $w'$ has a period of length $i = |w'| - |v| < |w|$, i.e., every character is the same as the one which follows $i$ positions later. Since the second half of $w'$ consists of $1$s only, this implies that so does the first half, contrary to our assumption. So $v$ is not a prefix of $w'$, and therefore $u$ is followed by two different characters in $v$ and in $w'$. Let us write $v=v'1^{|w|}$. If $|u|\geq |v'|$, then $u1$ is a prefix of $v$, implying that $u0$ is a prefix of $w'$, and thus $w'$ is smaller than $v$. If $|u|<|v'|$, assume that $u0$ is a prefix of $v$ and $u1$ of $w'$. Then $w$ has a substring ($u0$) which has more $0$s than the prefix of the same length ($u1$), a contradiction to $w$ being prefix normal. Therefore, again we have that $w'$ is smaller than $v$. 
\qed
\end{proof}

We can now state the following result:

\begin{theorem}\label{theor:PL}
Every prefix normal word is a pre-necklace. 
\end{theorem}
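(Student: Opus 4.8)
The plan is to reduce \textbf{Theorem~\ref{theor:PL}} to \textbf{Proposition~\ref{prop:preflin}} by handling the two trivial word classes separately and then invoking the proposition. Since the theorem (as stated in this section) concerns $\LPNzero$, I first note the two special cases not covered by the proposition: the empty word $\epsilon$ and words of the form $1^n$ with $n\ge 0$ — these are exactly the \pword{0} words with $|w|_0 = 0$. The empty word is vacuously a pre-necklace (it is a prefix of any necklace), and $1^n = 1^n$ is a prefix of the necklace $1^n$ itself (indeed $1$ is a Lyndon word, so $1^n$ is a power of a Lyndon word, hence a necklace). So these cases are immediate.

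For the main case, let $w\in\LPNzero$ with $|w|_0>0$. By Proposition~\ref{prop:preflin}, the word $w1^{|w|}$ is a Lyndon word. Every Lyndon word is in particular a necklace (a necklace being a Lyndon word or a power of one). Hence $w$, being a prefix of $w1^{|w|}$, is a prefix of a necklace, i.e.\ a pre-necklace, so $w\in\PL$. Combining the three cases, every \pword{0} word is a pre-necklace. I would phrase the write-up so that it is clear the statement refers to \pword{0} words with the order $0<1$, as fixed at the start of Section~\ref{sec:preneck}; if one wants the parallel statement for \pword{1} words, it follows by the complementation symmetry noted at the start of Section~\ref{sec:La}, after reversing the alphabet order to $1<0$ (pre-necklaces being defined relative to a fixed order).

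There is essentially no obstacle here: the content has already been absorbed into Proposition~\ref{prop:preflin}, and what remains is a two-line deduction plus disposing of the degenerate cases $\epsilon$ and $1^n$. The only thing worth a moment's care is making sure the degenerate cases genuinely are pre-necklaces — and they are, since $1$ is a Lyndon word and hence $1^n$ is a necklace, with $\epsilon$ a prefix of everything. So the proof is short:

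\begin{proof}
If $w$ contains no $0$, then $w=1^n$ for some $n\ge 0$; since $1$ is a Lyndon word, $1^n$ is a necklace and hence (trivially, being a prefix of itself) a pre-necklace. Otherwise $|w|_0>0$, and by Proposition~\ref{prop:preflin} the word $w1^{|w|}$ is a Lyndon word, in particular a necklace. Since $w$ is a prefix of $w1^{|w|}$, it is a pre-necklace. \qed
\end{proof}
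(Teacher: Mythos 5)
Your proof is correct and follows exactly the paper's own argument: split off the case $w=1^n$ (a power of the Lyndon word $1$, hence a necklace) and apply Proposition~\ref{prop:preflin} when $|w|_0>0$ to exhibit $w$ as a prefix of the Lyndon word $w1^{|w|}$. The extra remarks about $\epsilon$ and the alphabet-order convention are fine but add nothing beyond the paper's version.
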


\begin{proof}
If $w$ is of the form $1^{n}$, $n\ge 1$, then $w$ is a power of the Lyndon word $1$, hence it is a pre-necklace. Otherwise, $w$ contains at least one $0$, thus by Proposition \ref{prop:preflin}, it is the prefix of a Lyndon word.  \qed
\end{proof}

The languages $\LPNzero$  and $ \PL$, however, do not coincide. A shortest word in $ \PL$ that does not belong to $\LPNzero$ is $w=00110100$.  Below we give the table of the number of words in $\LPNzero$ of each length $n\leq 16$, compared with that of pre-necklaces. Both sequences are listed in the On-Line Encyclopedia of Integer Sequences~\cite{sloane} (sequences A062692 and A194850), where the reader can find further terms.

\begin{table}[ht]
\centering
\begin{small}
\begin{raggedright}

\begin{tabular}{c || c *{29}{@{\hspace{2.1mm}}l}}
 $n$    & 1\hspace{1ex} & 2\hspace{1ex} & 3\hspace{1ex} & 4\hspace{1ex} & 5\hspace{1ex} & 6\hspace{1ex} & 7\hspace{1ex} &
8\hspace{1ex} & 9\hspace{1ex} & 10 & 11 & 12 & 13 & 14 & 15 & 16 \\
\hline \rule[-6pt]{0pt}{22pt}
$\LPNzero\cap \Sigma^{n}$ & 2 & 3 & 5 & 8 & 14 & 23 & 41 & 70 & 125 & 218 & 395 & 697 & 1273 & 2279 & 4185 & 7568 \\
$ \PL\cap \Sigma^{n}$ & 2 & 3 & 5 & 8 & 14 & 23 & 41 & 71 & 127 & 226 & 412 & 747 & 1377 & 2538 & 4720 & 
8800  \\
\hline 
\end{tabular}
\end{raggedright}
\caption{\label{table:cardofL}The number of words in $\LPNzero$ and in $\PL$ for each length up to 16.}
\end{small}
\end{table}

\section{Enumeration results about prefix normal words}\label{sec:enumeration}

Let $\pn(n)$ denote the number of prefix normal words of length $n$. 
It is an easy consequence both of Lemma~\ref{lemma:equivDefPn} and of Proposition~\ref{prop:La_properties} that $\pn(n)$ grows exponentially. To see this, note that the conditions of Lemma~\ref{lemma:equivDefPn} are always satisfied if $r_1\leq r_2 \leq \ldots \leq r_k$, and thus the number of partitions of $n$ is a lower bound for $\pn(n)$. On the other hand, Proposition~\ref{prop:La_properties} states that for all $w$, $1^{|w|} w$ is prefix normal, so $\pn(2n) \geq 2^n$.

In Table~\ref{table:cardofL}, we give $\pn(n)$ for $n$ up to $16$, the sequence for $n$ up to $50$ can be found in the On-Line Encyclopedia of Integer Sequences~\cite{sloane}, sequence A194850. In Fig.~\ref{fig:pnf_increase} we show the growth ratio for small values of $n$. Two interesting phenomena can be observed: the values seem to approach 2 slowly, i.e., the number of prefix normal words almost doubles as we increase the length by 1. Second, the values show on oscillation pattern between even and odd values.

\begin{figure}[h]
\begin{center}
\includegraphics[width=\textwidth]{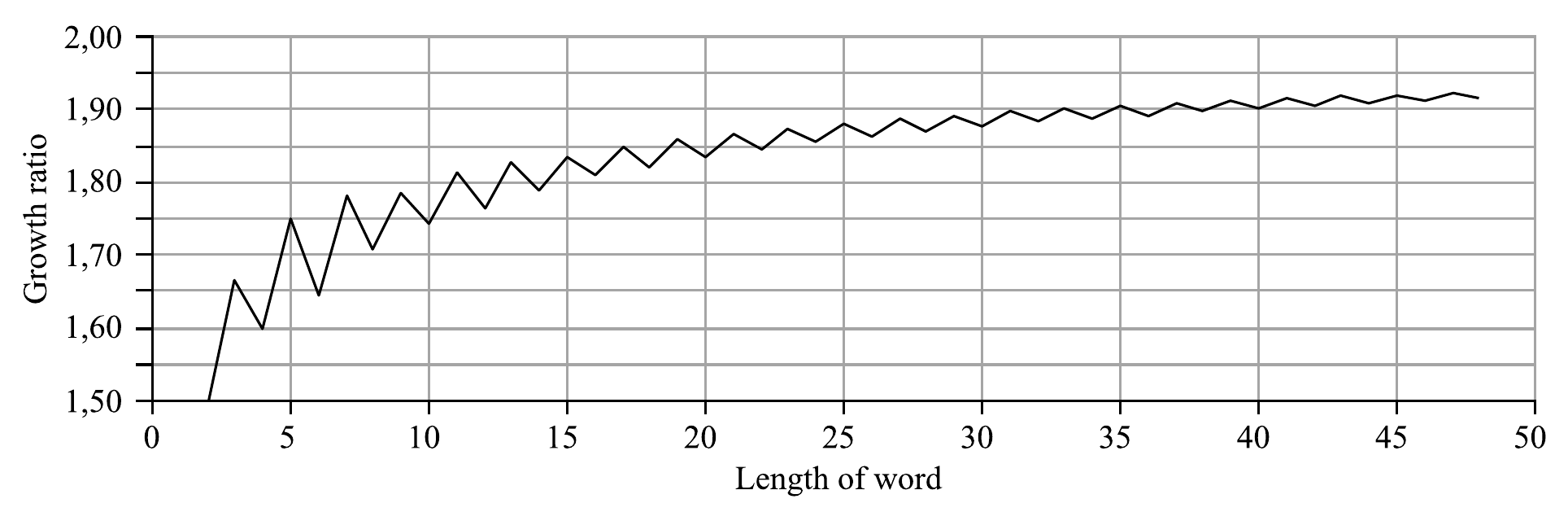}
\end{center}
\caption{The value of $\pn(n)/\pn(n-1)$ for prefix normal words $w$ of length $n$, for $n \leq 50$ (loglinear scale).\label{fig:pnf_increase}}
\end{figure}

\subsection{Asymptotic bounds on the number of prefix normal words}\label{sec:enum1}

We give lower and upper bounds on the number $\pn(n)$ of prefix normal words of length $n$.

\begin{theorem}

\label{thm:lowBound}
For $n$ sufficiently large
\begin{equation}
\pn(n) \ge 2^{n-4\sqrt{n \log n}}.
\end{equation}
\end{theorem}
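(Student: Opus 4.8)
The plan is to count a sufficiently rich subfamily of prefix normal words. The natural candidates are words of the form $w = 1^m u$ where $u$ is an arbitrary binary string of length $n-m$: by Proposition~\ref{prop:La_properties}(2), $1^m u \in \LPNone$ whenever $u \in \LPNone$, but more to the point, if $m$ is large enough relative to $|u|$ then $1^m u$ is automatically prefix normal regardless of $u$. Concretely, I would first establish the following claim: if $m \ge \fmax_1(u,k)$-type bounds hold — in fact it suffices that $m \ge |u|$, or even a weaker quantitative bound — then $1^m u$ is prefix normal. The cleanest version: if $w = 1^m u$ with $|u| = \ell$, then any factor $v$ of $w$ of length $k$ contains at most $\min(k,\ell) \le \ell$ ones coming from outside the leading block, plus at most $k$ ones total, while the prefix of $w$ of length $k$ has $\min(k,m)$ ones; so as long as $k \le m$ the prefix is all $1$s and dominates, and for $k > m$ we need $P_1(w,k) = m + |u_1\cdots u_{k-m}|_1 \ge |v|_1$. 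Using Lemma~\ref{lemma:test} (or a direct factor argument) one checks this holds once, say, $\ell \le m$, giving at least $2^{\ell}$ prefix normal words of length $m + \ell$; optimizing $m+\ell = n$ with $\ell \approx n/2$ already recovers $\pn(n) \ge 2^{n/2}$, which is far too weak.

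To get the claimed $2^{n - 4\sqrt{n\log n}}$ I would instead allow $u$ to itself be only \emph{mildly} constrained, not arbitrary, and shrink the leading block to length only $t := c\sqrt{n\log n}$ for a suitable constant $c$. The right framework is a block/window decomposition: write $w = 1^t b_1 b_2 \cdots b_s$ where each $b_i$ is a block of length $t$ and $s = \lfloor (n-t)/t \rfloor$ (pad the last block). A factor of length $k$ spans at most $\lceil k/t\rceil + 1$ consecutive blocks, so $|v|_1 \le (\lceil k/t \rceil + 1) \cdot (\text{max ones in any block})$. If I only impose that \emph{each block $b_i$ has at most, say, $t - \lceil \log_2 s\rceil$ ones is too crude}; the cleaner route is: impose nothing on the blocks except declare the first block after the $1^t$ prefix to be $1^t$ as well, i.e., take a prefix $1^{2t}$, and then note a factor of length $k$ with $k \le 2t$ is dominated by the all-ones prefix, while for $k > 2t$ one uses a counting/averaging bound on the at most $\lceil k/t\rceil$ interior blocks it meets versus the $2t + (k-2t)$ structure of the prefix. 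Making this rigorous is the crux — I expect to need an inductive construction: define an increasing "capacity" sequence so that the $j$-th block may contain at most $j$ extra ones beyond a baseline, mirroring the sums $r_1 + \cdots + r_i \le r_{j} + \cdots$ of Lemma~\ref{lemma:equivDefPn}. Counting the words satisfying these per-block capacity constraints, the number of \emph{free} bits lost is $O(s \log s) = O((n/t)\log(n/t))$, and we pay an additional $2t$ forced bits for the prefix; choosing $t \asymp \sqrt{n\log n}$ balances $2t$ against $(n/t)\log n$, yielding a loss of $O(\sqrt{n\log n})$ bits with the constant working out to $4$ after carefully tracking the $\log$ bases (note $\lg$ in the statement is $\log_2$).

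The main obstacle is the combinatorial core: proving that a clean, easily-countable sufficient condition on the blocks (a "staircase capacity" condition) really does force prefix normality, and then showing the number of block-configurations meeting it is at least $2^{n-4\sqrt{n\log n}}$. For the first part I would lean on Lemma~\ref{lemma:equivDefPn} — translating the per-block ones-counts into the gap sequence $r_1, r_2, \ldots$ and checking the partial-sum inequalities — or alternatively on Lemma~\ref{lemma:test} applied to the greedy left-to-right extension, whichever gives the shorter argument. For the second part, the estimate is an entropy computation: if block $j$ is allowed any of $\sum_{i \le g(j)}\binom{t}{i}$ contents for a slowly growing threshold $g(j)$, then $\prod_j (\text{choices}) \ge 2^{(n-t) - \sum_j (t - \log_2(\text{choices}_j))}$, and each deficit $t - \log_2(\text{choices}_j)$ is $O(\log n)$ provided $g(j) \ge t/2 - O(\sqrt{t\log n})$; summing over $s = n/t$ blocks gives the $O(\sqrt{n\log n})$ total deficit. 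I would present the block-capacity lemma first, then the enumeration lemma, then combine and optimize $t$ at the end, deferring the routine inequality chasing (Stirling, choice of the constant $4$) to a short computation.
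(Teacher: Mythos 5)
Your overall strategy---a leading run of $1$s of length $\Theta(\sqrt{n\log n})$, followed by equal-length blocks with controlled numbers of $1$s, an entropy count losing $O(\log n)$ bits per block, and balancing the block length at $\sqrt{n\log n}$---has exactly the shape of the paper's argument, and your accounting of where the $4\sqrt{n\log n}$ exponent loss comes from is right. But you explicitly leave open what you call the combinatorial core, namely a checkable condition on the blocks that forces prefix normality, and the condition you sketch does not work. Imposing only an upper capacity $g(j)$ on the number of $1$s in block $b_j$ (even a growing ``staircase'' of capacities) cannot force prefix normality, because prefix normality also requires every \emph{prefix} to be rich in $1$s, not just every window to be poor in them. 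Concretely, take $b_1=\cdots=b_{s/2}=0^t$ and give each later block $\lfloor t/2\rfloor$ ones: this satisfies any reasonable capacity condition, yet the factor formed by the last $s/2$ blocks has about $st/4$ ones while the prefix of the same length has only $2t + o(st)$ ones, so the word is not prefix normal once $s$ is large. Any working condition must combine an upper bound on the ones in every window with a lower bound on the ones in every prefix of the block region.

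The paper supplies the missing lemma with one clean choice: blocks of length $2k$ each containing \emph{exactly} $k$ ones, preceded by $1^{4k}$. The exactness gives both bounds at once. Every factor of length $\ell$ has at most $\ell/2+k$ ones (a full block contributes exactly half its length, and a partial end-block of length $s$ contributes at most $\min(s,k)\le s/2+k/2$), while every prefix of length $\ell>4k$ has at least $\ell/2+k$ ones thanks to the $4k$ bonus from the leading run; prefixes of length at most $4k$ are all ones. The count is then $\binom{2k}{k}^{(n-4k)/(2k)}\ge \bigl(2^{2k}/(2\sqrt{k})\bigr)^{(n-4k)/(2k)}$, i.e.\ a loss of only $O(\log k)$ bits per block plus $4k$ bits for the leading run, and setting $k=\sqrt{n\log n}$ gives the theorem. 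So to complete your plan, replace the capacity staircase by exactly balanced blocks; everything else in your outline (the averaging over partial blocks, the per-block entropy deficit, the optimization of the block length) then goes through essentially as you describe.
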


\begin{proof}
Let $k=k(n)$ be a positive integer to be fixed later. First we only consider words whose length, $n$, is a multiple of $2k$,
whose first $4k$ letters are $1$s, and in each of the following blocks of length $2k$, there are exactly $k$ $1$s and $k$  $0$s. The number of such words is  $\binom{2k}{k}^{(n-4k)/2k}$ and by construction, they are all prefix normal.

We use the inequality ${2k \choose k} \ge 2^{2k}/(2\sqrt{k})$ and substitute $k = \sqrt{n \log n}$ in the third step.
\begin{align*}
{2k \choose k}^{(n-4k)/2k}
& \ge \left( \frac{2^{2k}}{2\sqrt{k}} \right)^{n/(2k)-2} \\
& = \frac{2^n}{(2\sqrt{k})^{n/(2k)}} \frac{4k}{2^{4k}} \\
& = \frac{2^n}{2^{4\sqrt{n \log n}}} (2\sqrt{k})^{1-n/(2k)} \\
& \ge \frac{2^n}{2^{4\sqrt{n \log n}}}  \text{ for sufficiently large } n.
\end{align*}
The last inequality follows from the fact that $\lim_{n\rightarrow\infty} (2\sqrt{k})^{1-n/(2k)} = 0$ if $k = \sqrt{n \log n}$.
\qed
\end{proof}

Next we show how to obtain an upper bound on $\pn(n)$, considering the length of the first 1-run.

\begin{theorem}
\label{thm:upBound}
For $n$ sufficiently large, we have $\pn(n) \le 2^{n- \lg n + 1}$.
\end{theorem}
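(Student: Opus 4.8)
The plan is to group prefix normal words of length $n$ by the length of their initial run of $1$s, say $1^j$ with $0 \le j \le n$ (with $j=n$ for the word $1^n$). For a word starting with exactly $1^j$ (i.e. $1^j 0 \cdots$), prefix normality forces every length-$j$ factor to contain at most $j$ ones — trivially true — but more usefully it forces the prefix of length $j+1$ to have exactly $j$ ones, and every factor of length up to $j$ starting inside the word after position $j$ to be bounded by the prefix; the key point is that once the first $0$ appears at position $j+1$, the structure of the remaining $n-j-1$ letters is heavily constrained. I would aim to bound the number of prefix normal words whose first $1$-run has length exactly $j$ by something like $2^{n-j-1}$ (the trivial count of the tail) and then argue that for \emph{small} $j$ the true count is much smaller, so that summing over $j$ the dominant contribution comes from $j \approx \lg n$, giving roughly $2^{n-\lg n}$ total.

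**Key steps.** First, I would make precise the claim that if $w = 1^j 0 x$ is prefix normal with $|x| = n-j-1$, then $x$ itself must satisfy strong constraints; in particular, using Lemma~\ref{lemma:test}-style reasoning (or property (2) of Proposition~\ref{prop:char}), the prefix $1^j 0$ caps $F_1$ at value $j$ for all lengths $\le j+1$, so no factor of $w$ of length $k \le j+1$ may have more than $\pr_1(w,k)$ ones; for $k \le j$ this means no factor of length $k$ is $1^k$ unless it is within the initial run. Second — and this is where the $\lg n$ comes from — I would count: a prefix normal word with short initial $1$-run $1^j$ cannot contain $1^{j+1}$ as a factor anywhere, which is a forbidden-pattern condition; the number of binary strings of length $n$ avoiding $1^{j+1}$ is at most $2^n \cdot c^{-n/2^{j}}$ or similar, which is $\le 2^{n-\lg n}$ precisely when $j$ is around $\lg n - \lg\lg n$. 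Third, I would handle the complementary range $j \gtrsim \lg n$ crudely: there the word is determined by its last $n-j$ bits, giving at most $\sum_{j \ge \lg n} 2^{n-j} \le 2^{n-\lg n+1}$ words. Combining the two ranges and checking the constants gives $\pn(n) \le 2^{n-\lg n+1}$ for large $n$.

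**Main obstacle.** The delicate part is the middle range of $j$ — around $j = \lg n$ — where neither the forbidden-pattern bound nor the "tail determines the word" bound is individually tight enough, and I will need to combine both: a word with $1$-run of length $j$ is determined by its last $n-j$ bits \emph{and} those bits must avoid $1^{j+1}$, and I need the product of these savings to beat $2^{-\lg n}$ uniformly. Getting the arithmetic to close cleanly — i.e. showing $\sum_{j=0}^{n} (\text{bound for run-length } j) \le 2^{n-\lg n+1}$ — will require choosing the crossover point carefully and absorbing lower-order terms, so I expect the bookkeeping around the threshold $j \approx \lg n$ to be the real work; the qualitative picture (few long-run words because the tail is short; few short-run words because $1^{j+1}$ is forbidden) is straightforward.
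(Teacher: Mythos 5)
Your approach is genuinely different from the paper's, and the comparison is instructive. The paper disposes of this theorem in three lines: by Theorem~\ref{theor:PL} every prefix normal word is a pre-necklace (stated for $\LPNzero$, which has the same counting function as $\LPNone$ under exchange of $0$s and $1$s), so $\pn(n)\le PL(n)$, and the enumeration of pre-necklaces in Ruskey--Savage--Wang gives $PL(n)\le (1+o(1))\cdot 2\cdot 2^n/n\le 2^{n-\lg n+1}$. Your direct attack via the length $j$ of the initial $1$-run bypasses Section~\ref{sec:preneck} entirely, and its structural core is correct: if $w=1^j0x$ is prefix normal then $F_1(w,j+1)=P_1(w,j+1)=j$, so $w$ (hence $x$) contains no factor $1^{j+1}$; and the two savings you identify -- only $n-j-1$ free bits in the tail, and the tail must avoid $1^{j+1}$ -- do combine to close the estimate. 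Indeed, the number of length-$m$ words avoiding $1^{j+1}$ is $O\bigl(2^m(1-2^{-j-2})^m\bigr)=O\bigl(2^m e^{-m2^{-j-2}}\bigr)$, and summing $2^{n-j-1}e^{-(n-j)2^{-j-2}}$ over $j\le\lg n$ (a geometrically decaying sum once one substitutes $j=\lg n-c$) together with $\sum_{j>\lg n}2^{n-j-1}\le 2^{n-\lg n-1}$ yields $O(2^n/n)$. So the route is viable and, once completed, is self-contained and elementary, whereas the paper's is shorter but imports an external enumeration result.

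The gap is that the proposal defers exactly the part that constitutes the proof. The bound ``$2^n\cdot c^{-n/2^j}$ or similar'' on $1^{j+1}$-avoiding strings is asserted, not proved; one needs an explicit, uniform-in-$j$ estimate (e.g.\ via the recurrence for runs of bounded length, or the dominant root of $x^{j+2}-2x^{j+1}+1$), including the multiplicative constant, because the final inequality has essentially no slack in its leading term. Likewise the summation over $j$ is only gestured at: naively allocating $2^{n-\lg n}$ to each of the $\approx\lg n$ small values of $j$ would give $\lg n\cdot 2^{n-\lg n}$, which is too big, so the argument genuinely requires exhibiting the geometric decay away from the critical scale $j\approx\lg n-\lg\ln n$ -- the step you yourself flag as ``the real work.'' As written, the text is a plan for a proof rather than a proof; to be acceptable it must carry out the avoidance bound and the two-range (or single combined-bound) summation explicitly.
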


\begin{proof}
This will follow from enumeration results about pre-necklaces since every \pword{0} word is a pre-necklace.
Let $PL(n)$ be the number of pre-necklaces of length $n$.  In \cite{Ruskey92} it is shown (top of page 424) that
\[
PL(n) \le \sum_{i=1}^n \frac{2^i}{i} + \sum_{i=1}^n \sqrt{2^i}.
\]
They also show that (Lemma 5 of \cite{Ruskey92})
\[
\lim_{n \rightarrow \infty} \frac{n}{2^n} \sum_{i=1}^n \frac{2^i}{i} = 2.
\]
Thus, for large enough $n$, and fixed $\epsilon > 0$,
\[
PL(n) \le (1+\epsilon) \sum_{i=1}^n \frac{2^i}{i} \le (1+2\epsilon) 2^n/n \le 2^{n- \lg n + 1}.
\]
\hfill \qed

\end{proof}

\subsection{Exact formulas for words with fixed density.}\label{sec:enum2}

For a binary word $w$, its {\em density} is defined as the number of $1$s in $w$, i.e.\ as $|w|_1$. If we count the number of prefix normal words of length $n$ with a given fixed number of $1$s, we get exact results in a few cases. Let us denote by $\pn(n, d)$ the cardinality of the set $\{w \in \LPNone \cap \Sigma^n\mid |w|_1 = d\}$.

\begin{proposition}
\label{prop:genFunc}
For $d=0,1,\ldots,6$, we have the generating functions $f_d(x)=\sum_{n=0}^{\infty} \pn(n,d)x^n$:
\begin{center}
\begin{minipage}{7.2cm}
\begin{eqnarray*}
f_0(x) &=&  \frac{1}{1-x}\\
f_1(x) &=&  \frac{x}{1-x}\\
f_2(x) &=&  \frac{x^2}{(1-x)^2}\\
f_3(x) &=& \frac{x^3}{(1-x^2)(1-x)^2}
\end{eqnarray*}
\end{minipage}
\begin{minipage}{7.2cm}
\begin{eqnarray*}
f_4(x) &=& \frac{x^4}{(1-x^3)(1-x)^3}\\[2mm]
f_5(x) &=&\frac{x^5(1+x+x^2)}{(1-x^4)(1-x^2)^2(1-x)^2}\\[2mm]
f_6(x) &=& \frac{x^6(1+x+x^2+x^3)}{(1-x^5)(1-x^3)(1-x^2)(1-x)^3}
\end{eqnarray*}
\end{minipage}
\end{center}
\end{proposition}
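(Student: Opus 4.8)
The plan is to use Lemma~\ref{lemma:equivDefPn}, which translates prefix normality of $w = 10^{r_1-1}10^{r_2-1}\cdots10^{r_d-1}$ into a system of partial-sum inequalities on the run-lengths $r_1,\dots,r_{d-1}$, with $r_d \geq 1$ unconstrained. Writing $n = d + \sum_{j=1}^{d}(r_j-1) = \sum_{j=1}^{d} r_j$, the generating function $f_d(x)$ is $x^d$ times the generating function (in the variable $x$, tracking $\sum r_j$) for the number of integer tuples $(r_1,\dots,r_d)$ with each $r_j \geq 1$ satisfying the constraints of Lemma~\ref{lemma:equivDefPn}. Since $r_d$ appears in no constraint, it contributes a free factor $x/(1-x)$, so for $d\geq 1$ it suffices to count $(r_1,\dots,r_{d-1})$ with $r_j\geq 1$ and the triangular system of inequalities, and multiply by $x^d/(1-x) \cdot x^{-1} = x^{d-1}/(1-x)$ — wait, more carefully: $f_d(x) = \frac{x}{1-x}\cdot x^{d-1} \cdot g_{d-1}(x)$ where $g_{d-1}(x) = \sum x^{r_1+\cdots+r_{d-1}}$ over admissible $(r_1,\dots,r_{d-1})$. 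The cases $d=0,1$ are immediate ($g$ is the empty product $1$), and $d=2$ gives $g_1(x) = x/(1-x)$ since $r_1\geq 1$ is unconstrained, yielding $f_2(x) = x^2/(1-x)^2$.

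The substantive work is $d=3,4,5,6$, i.e.\ counting lattice points in the polyhedra cut out by the constraints. For $d=3$: the only constraint is $r_1 \leq r_2$; so I would compute $\sum_{1\le r_1\le r_2} x^{r_1+r_2}$, which is a standard partition-type sum equal to $x^2/((1-x)(1-x^2))$ — wait, let me recheck against the claimed $f_3 = x^3/((1-x^2)(1-x)^2)$: we need $g_2(x) = x^2/((1-x^2)(1-x))$, and indeed $\sum_{1\le a\le b} x^{a+b} = \sum_{a\ge1} x^{2a}/(1-x)\cdot$(correction: $\sum_{b\ge a} x^b = x^a/(1-x)$, so the sum is $\sum_{a\ge1} x^a \cdot x^a/(1-x) = \frac{1}{1-x}\cdot\frac{x^2}{1-x^2}$), matching. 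For $d=4$ the constraints are $r_1\le r_2$ and $r_1\le r_3$; for $d=5$ they are $r_1\le r_2, r_1\le r_3, r_1\le r_4$ together with $r_1+r_2\le r_3+r_4$; for $d=6$ one adds $r_1\le r_5$, $r_1+r_2\le r_3+r_4$, $r_1+r_2\le r_4+r_5$, and $r_1+r_2+r_3\le r_4+r_5$. In each case I would stratify by the value $m = r_1$, substitute $r_j = m + s_j$ with $s_j\ge 0$ for $j\ge2$, and reduce to counting nonnegative tuples $(s_2,\dots,s_{d-1})$ satisfying the ``shifted'' inequalities; the $m$-sum then contributes a geometric factor $x^{m d'}$ (with $d'$ the number of variables) that produces the $(1-x^{d-1})$-type denominators visible in the formulas. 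The inner count is a linear-diophantine chain whose generating function is a product/sum of geometric series; for $d=5,6$ a short case split on which of $s_2,\dots$ is largest, or direct transfer-matrix/telescoping, gives the numerator polynomials $1+x+x^2$ and $1+x+x^2+x^3$.

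The main obstacle is the bookkeeping for $d=5$ and especially $d=6$: the polyhedra are no longer simple order cones (the constraints mix single variables with pairwise and triple sums), so the lattice-point generating function is not just a product over a chain, and one must correctly handle the overlapping inequalities — this is where an erroneous term in a numerator or a wrong denominator factor would creep in. I would mitigate this by (i) verifying each resulting rational function against the known initial values of $\pn(n,d)$ (e.g.\ via Table~\ref{table:cardofL} data and direct enumeration for small $n$), and (ii) sanity-checking degrees: the numerator degree minus denominator degree must be consistent with $\pn(n,d)$ being eventually polynomial in $n$ of the expected degree (roughly $\lfloor d/2\rfloor - 1$ for the leading behavior, reflecting the number of ``free'' directions in the cone). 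Once the four rational functions are pinned down and checked, the proposition follows, and in fact the same stratification argument shows $f_d(x)$ is rational for every fixed $d$, since the constraint region is always a rational polyhedral cone intersected with the positive orthant.
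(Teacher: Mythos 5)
Your overall strategy is exactly the paper's: invoke Lemma~\ref{lemma:equivDefPn}, note that $n=\sum_j r_j$ and that $r_d$ is unconstrained, and count lattice points in the resulting cone by the substitution $r_1=m$, $r_j=m+s_j$. (The paper handles $d\le 3$ by directly checking $\pn(n,d)$, does $d=4$ with precisely your substitution, and dismisses $d=5,6$ as ``more complicated but manageable case analysis,'' so your write-up is, if anything, more explicit.) Your constraint lists and computations are correct through $d=5$: for $d=4$ the non-redundant system is indeed $r_1\le r_2$, $r_1\le r_3$, and for $d=5$ it is $r_1\le r_2,r_3,r_4$ together with $r_1+r_2\le r_3+r_4$; carrying out your stratification on these does reproduce $f_4$ and $f_5$. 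One cosmetic slip: your displayed relation $f_d(x)=\frac{x}{1-x}\cdot x^{d-1}\cdot g_{d-1}(x)$ is inconsistent with your definition $g_{d-1}(x)=\sum x^{r_1+\cdots+r_{d-1}}$ (the factor $x^{d-1}$ is double-counted); your actual computations use the correct $f_d=\frac{x}{1-x}\,g_{d-1}$, so only the formula, not the results, needs fixing.

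There is, however, a concrete error in your setup for $d=6$. The rows of Lemma~\ref{lemma:equivDefPn} for $d=6$ are, for $k=1,\dots,4$, the inequalities $r_1+\cdots+r_k\le r_j+\cdots+r_{j+k-1}$ with $j=2,\dots,6-k$. After cancelling the overlap between the two windows, the non-redundant system is exactly $r_1\le r_2,r_3,r_4,r_5$, $r_1+r_2\le r_3+r_4$, and $r_1+r_2\le r_4+r_5$; in particular the row $k=3$, $j=3$ inequality $r_1+r_2+r_3\le r_3+r_4+r_5$ reduces to $r_1+r_2\le r_4+r_5$, which you already have. Your additional constraint $r_1+r_2+r_3\le r_4+r_5$ does not occur in the system and is strictly stronger; if you ran your stratification with it included you would cut out admissible words and obtain a wrong rational function. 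With the corrected list, the substitution $r_j=m+s_j$ turns the system into $s_2\le s_4+\min(s_3,s_5)$, and summing the geometric series (splitting on $\min(s_3,s_5)$) yields $\frac{x^6(1+x^2)}{(1-x^5)(1-x^3)(1-x)^4}$, which equals the stated $f_6$ since $1+x+x^2+x^3=(1+x)(1+x^2)$ and $1-x^2=(1-x)(1+x)$. So the method is sound and matches the paper, but the $d=6$ case needs the constraint list repaired before the computation (or your proposed numerical cross-check) can certify the formula.
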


\begin{proof}
For $d\leq3$, one easily checks $\pn(n,0)=\pn(n,1)=1$, $\pn(n,2)=n-1$ and $\pn(n,3) = \floor{(n+1)^2/4}$, giving the desired functions.

For $d=4$, we calculate the number of positive solutions $r_1, r_2, r_3, r_4$ to the inequalities in Lemma \ref{lemma:equivDefPn}.
Let $q_1=r_1-1$, $q_4=r_4-1$, $d_2 = r_2-r_1$ and $d_3=r_3-r_1$. We are counting the nonnegative solutions of
\[
3q_1 + d_2 + d_3 + q_4 + 4 = n,
\]
which give generating function $f_4(x)$ by equating the coefficients of $x^n$ in the expansion of the following product:
\begin{gather}
(1+x^3+x^6+\cdots) (1+x+x^2+\cdots)^3\cdot x^4 \\
= \frac{x^4}{(1-x^3)(1-x)^3}.
\end{gather}
More complicated but manageable case analysis leads to the results for $d=5$ and $6$. \hfill \qed

\end{proof}

Similar formulas can be derived for $\pn(n, n-d)$ for small values of $d$.
Unfortunately, no clear pattern is visible for $f_d(x)$ that we could use for calculating $\pn(n)$.

The inequalities in Lemma \ref{lemma:equivDefPn} define linear diophantine equations. The general theory for enumerating solutions of such equations \cite{stanley} guarantees that there is a closed rational function form for the generating functions with the observed denominators, in \cite{LinDioph} there are algorithms for calculating these functions (which, however are not efficient enough to get results for much larger values of $d$). Above, we only discussed the first few simple cases. We did not succeed in extending our list of concrete formulas for the rational functions $f_d$ for $d>6$ using automated computation.

\subsection{Exact formulas for words with a fixed prefix.}\label{sec:enum3}

We now fix a prefix $w$ and give enumeration results on prefix normal words with prefix $w$. Our first result indicates that we have to consider each $w$ separately.

\begin{definition}
If $w$ is a binary word, let $\lext(w) = \{ w' : ww' \textrm{ is prefix normal} \}$, and $\lext(w, m) = \lext(w)\cap \Sigma^{m}$.
Let $\pext(w, m, d) = |\{ w' : ww'$  is prefix normal of length $|w|+m$  and density $d \}|$, and $\pext(w, m) = |\lext(w, m)|$.
\end{definition}

\begin{lemma}
\label{lemma:extLang}
Let $v, w \in 1\{0,1\}^*$ be both prefix normal. If $v\neq w$ then $\lext(v) \neq \lext(w)$.
\end{lemma}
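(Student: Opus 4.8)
The statement is that the map $w \mapsto \lext(w)$ is injective on prefix normal words starting with $1$. The plan is to prove the contrapositive by producing, for any two distinct prefix normal words $v \neq w$ (both in $1\Sigma^*$), an explicit extension that works for one but not the other. Without loss of generality assume $|v| \le |w|$.

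First I would handle the case where $v$ is a prefix of $w$. Then $w = vu$ for some nonempty $u$, and since $w$ itself is prefix normal, $u \in \lext(v)$. But $u \notin \lext(w)$: indeed $wu = vuu$ has length $|v| + 2|u|$, whereas the longest possible extension that could keep $\lext(w)$ containing $u$ is unconstrained in length — so this needs more care. The clean argument instead: pick $k$ large enough that $1^k \notin \lext(w)$ is false (by Proposition~\ref{prop:La_properties}.2, $1^k w$ is always prefix normal, so prepending $1$s is the wrong direction). The right move is to use the fact that $\lext$ records right-extensions. Take $u$ itself: we have $u \in \lext(v)$ since $vu = w$ is prefix normal, but is $u \in \lext(w)$? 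We need $wu = vuu$ prefix normal. This may or may not hold. So in the prefix case I would argue directly from the $F_1$ function: since $v$ is a proper prefix of $w$ and both are prefix normal with $v \ne w$, the word $w$ has strictly larger length, and one checks that $0^{|w|} \in \lext(w) \setminus \lext(v)$ fails too (both contain it). The genuinely distinguishing extension: since $w$ is longer, $w$ contains at least one $1$ beyond position $|v|$ or at least one $0$; use Lemma~\ref{lemma:test} to find a short suffix-controlled extension.

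The cleaner and more robust approach, which I would actually carry out, is this: since $v \ne w$ are both prefix normal and begin with $1$, and $P_1$ determines a word, there is a smallest index $\ell$ where either $v$ and $w$ differ (if $\ell \le \min(|v|,|w|)$) or where one has ended. By symmetry of the roles, it suffices to find one extension distinguishing them. Consider appending a long block of $1$s followed by nothing, or more precisely: let $p$ be the word $1^N$ for a suitably large $N$. By Proposition~\ref{prop:La_properties} and Lemma~\ref{lemma:equivDefPn}, whether $w1^N$ stays prefix normal is automatic (appending $1$s can break normality), so instead append $0$s: by Proposition~\ref{prop:La_properties}.2, $w0^k \in \lext(w)$ for all $k$, so that never distinguishes. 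Hence the distinguishing extension must contain a $1$ in a controlled position. The key idea: choose the extension $w'$ so that it "tests" the $F_1$-value of one word at a length where the two words disagree. Concretely, set $m = |w| - |v| \ge 0$; if $m = 0$ then $v \ne w$ forces a difference at some position $\le |v|$, and one builds $w' = 0^a 1$ with $a$ chosen so that the final $1$ sits atop a prefix where $P_1(v\cdot) \ne P_1(w\cdot)$; Lemma~\ref{lemma:test} then certifies normality for exactly one of $v w'$, $w w'$. If $m > 0$, the proper-prefix subcase, append $0^{|w|}1$: for $w$ the new $1$ must not exceed $P_1(w,|w|+1)$ against the suffix $0^{|w|}$, which has no $1$s, so $w 0^{|w|}1$ is prefix normal by Lemma~\ref{lemma:test}; but then check whether $v 0^{|w|}1$ is too, and if it always is, iterate with a longer tested prefix.

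The main obstacle I anticipate is that appending $0$s and $1$s behaves asymmetrically (Proposition~\ref{prop:La_properties}.2 makes $0$-appending free, and $1$-appending is exactly what Lemma~\ref{lemma:test} controls), so the distinguishing witness must be of the form $0^a 1 (\text{anything})$ and one has to pin down $a$ using where the $\pos_1$ or $P_1$ sequences of $v$ and $w$ first diverge — taking care that "diverge" could mean one word simply being a prefix of the other. I would organize the proof as: (i) reduce to finding $w'$ of the form $0^a 1$; (ii) split on whether $v$ is a proper prefix of $w$ or the two first differ at an internal position; (iii) in each case use Lemma~\ref{lemma:test} to verify normality of the good extension and failure of normality of the bad one, with $a$ chosen as $|w|$ (or the first index of disagreement minus one). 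The verification in step (iii) is the routine-but-fiddly part; the conceptual content is entirely in choosing $a$ correctly.
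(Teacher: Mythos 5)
Your final plan --- distinguishing $v$ from $w$ by an extension of the form $0^a1$ --- does not work, and this is a genuine gap rather than a ``routine-but-fiddly'' verification. A factor of $w0^a1$ ending at the appended $1$ has exactly one more $1$ than some suffix of $w0^a$, so by Lemma~\ref{lemma:test} the extension $0^a1$ only tests inequalities of the form $(\text{suffix count})+1\le P_1(w,\cdot)$, and these can hold with slack for \emph{both} words even when $v\ne w$. Concretely, take $v=110100$ and $w=110010$: both are prefix normal, both start with $1$, and they first differ at position $i=4$, yet $v0^a1$ and $w0^a1$ are prefix normal for every $a\ge 0$. (For $a=0$, the numbers of $1$s in the suffixes of $v$ of lengths $0,\dots,5$ are $0,0,0,1,1,2$ against the prefix counts $1,2,2,3,3,3$, and for $w$ they are $0,0,1,1,1,2$ against $1,2,2,2,3,3$; every inequality of Lemma~\ref{lemma:test} holds strictly, and increasing $a$ only loosens them since $P_1$ is non-decreasing.) In particular neither of your proposed choices $a=|w|$ nor $a=i-1$ can succeed; with $a\ge|w|$ the word $u0^a1$ is prefix normal for \emph{every} prefix normal $u\in 1\Sigma^*$. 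Your earlier remark that the witness must have the form $0^a1(\text{anything})$ is correct, but the conceptual content lives entirely in the ``anything'': the paper's witness is $0^{|w|}$ followed by a \emph{full copy} of one of the two words, so that the whole prefix $v_1\cdots v_i$ reappears as a factor of $w0^{|w|}v$; since $P_1(v,i)>P_1(w,i)$ at the first disagreement, that factor kills prefix normality of $w0^{|w|}v$ while $v0^{|w|}v$ survives. Probing with a single $1$ cannot detect a difference in the prefix-counting functions; probing with a whole word can.

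The second gap is the case where $v$ is a proper prefix of $w$, specifically $w=v0^m$ (when $w$ has a $1$ beyond position $|v|$ one can fall back on the first case). Your ``append $0^{|w|}1$ \dots\ and if it always is, iterate with a longer tested prefix'' is not an argument, and indeed no extension of the form $0^a1$ can separate $v$ from $v0^m$ either. The paper needs a separate device here: if $vv$ is prefix normal then so is $vvv$ but $v0^mvv$ is not; otherwise one takes the smallest $k\ge 1$ with $v0^kv$ prefix normal and observes that $v0^{k-1}v\notin\LPNone$ while $w0^{k-1}v=v0^{m+k-1}v\in\LPNone$. Nothing in your sketch converges on this minimality argument, and the several unresolved ``this may or may not hold'' branches leave the proof incomplete even in outline.
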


\begin{proof}
We may assume $|v| \leq |w|$.\\
\emph{First case.} $v$ is not a prefix of $w$. Let $i$ denote the first position where they differ. If $v_i=1$ and $w_i=0$, then for $u=0^{|w|}v$ we have that $vu$ is prefix normal while $wu$ is not. If $v_i=0$ and $w_i=1$, then let $u=0^{|w|}w$. We have that $vu$ is not prefix normal but $wu$ is.\\
\emph{Second case.}
$v$ is a prefix of $w$. If $w$ has a 1 in any position after $|v|$, then we can proceed as in the first case. The remaining case is when $w = v0^m$ for some $m>0$. If $vv$ is prefix normal, then so must be $vvv$, but $v0^mvv$ cannot be. Otherwise, let $k\geq 1$ be the smallest integer (which is sure to exist) such that $v0^kv$ is prefix normal. Then $v0^{k-1}v$ is not prefix normal while $w0^{k-1}v$ is. This completes the proof.
\hfill \qed
\end{proof}

We were unable to prove that the growth of these two extension languages also differ.
\begin{conjecture}
Let $v, w \in 1\{0,1\}^*$ be both prefix normal. If $v\neq w$ then the infinite sequences $(\pext(v, m))_{m\geq1}$ and $(\pext(w,m))_{m\geq 1}$ are different.
\end{conjecture}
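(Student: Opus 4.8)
The plan is to prove the conjecture by finding, for each pair $v \neq w$ of prefix normal words in $1\{0,1\}^*$, a length $m$ (or an asymptotic regime) at which $\pext(v,m) \neq \pext(w,m)$. The natural starting point is the proof of Lemma~\ref{lemma:extLang}: there we exhibited a \emph{single} extension $u$ that works for exactly one of $v,w$. The idea is to upgrade that single witness into a whole family of extensions of the same length, so that the two counting functions $\pext(v,m)$ and $\pext(w,m)$ actually differ in value, not merely as sets. A first attempt: in the ``first case'' of Lemma~\ref{lemma:extLang}, where $v_i \neq w_i$ at the first difference $i$ and (say) $v_i=1$, $w_i=0$, the extension $u = 0^{|w|}v$ works for $v$ but not $w$; one would try to show that for the common suffix length $m = |u| + t$ with $t$ large, the map $u' \mapsto 0^{|w|} v 0^a u'$-type extensions give strictly more completions of $v$ than of $w$.

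The key steps, in order, would be: (1) Reduce to the case $|v| \le |w|$ and split into the same two cases as Lemma~\ref{lemma:extLang}. (2) In each case, instead of one distinguishing word $u$, construct an injection from $\lext(w,m)$ into $\lext(v,m)$ for suitable $m$, together with at least one element of $\lext(v,m)$ outside the image; this gives $\pext(v,m) > \pext(w,m)$. The injection should come from a ``local surgery'': given a completion $w u'$ that is prefix normal, one modifies a bounded-length prefix of the extension to produce a prefix normal completion of $v$, reversibly. The guiding intuition is that $v$ (having a $1$ where $w$ has a $0$, or being the shorter/prefix word) is ``at least as liberal'' a prefix: by Proposition~\ref{prop:char}(3) and Lemma~\ref{lemma:test}, a larger prefix-ones profile on the prefix only makes it easier to append symbols. (3) Handle the degenerate subcase $w = v0^m$ of the second case separately, as in the original proof, by comparing the first $k$ for which $v0^kv$ resp.\ $w0^{k-1}v$ becomes prefix normal and padding with arbitrary suffixes on the right.

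I expect the main obstacle to be step (2): making the injection $\lext(w,m)\hookrightarrow\lext(v,m)$ genuinely well-defined. The subtlety is that prefix normality of $wu'$ constrains $u'$ through the \emph{entire} prefix $wu'$, not just through $w$; so a local change near position $|w|$ can interact with $1$s arbitrarily far to the right via the $\pos_1$-inequalities of Proposition~\ref{prop:char}(4). One cannot naively say ``$v$ has a richer prefix, hence more extensions'' because replacing $w$ by $v$ also changes all the $\pos_1$ values, and the shift is not monotone in an obvious way when $|v| < |w|$. A cleaner route may be to pick $m$ of the form $m = N - |w|$ for $N$ much larger than $|v|+|w|$ and count completions that begin with a long run $1^{|w|}$: by Proposition~\ref{prop:La_properties}(2), prepending $1^{|w|}$ to any prefix normal word is prefix normal, which should let one relate $\pext(v,m)$ and $\pext(w,m)$ to $\pn$ of a common shorter length plus a controlled correction term depending on $v$ versus $w$, and then argue the correction terms differ. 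If even this fails to separate the counts for \emph{all} pairs, a fallback is to prove the weaker statement that the \emph{generating functions} $\sum_m \pext(v,m)x^m$ differ (they are rational by the theory cited after Proposition~\ref{prop:genFunc}), which would still imply the sequences differ; but establishing this in full generality is, I suspect, exactly why the authors left it as a conjecture.
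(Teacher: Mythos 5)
This statement is a conjecture: the paper itself offers no proof and explicitly says ``We were unable to prove that the growth of these two extension languages also differ.'' Your text is likewise not a proof but a plan, and its central step is exactly the part that is missing. Step (2) --- the injection $\lext(w,m)\hookrightarrow\lext(v,m)$ with an element outside the image --- is never constructed, and you yourself identify why it is doubtful: prefix normality of $wu'$ constrains $u'$ through the whole word, so no local surgery near position $|w|$ is obviously well defined. Worse, the intended direction of the inequality is not even stable across cases. In the first case of Lemma~\ref{lemma:extLang}, if $v_i=0$ and $w_i=1$ then the distinguishing extension lies in $\lext(w)\setminus\lext(v)$, not the other way around; and when $v$ is a proper prefix of $w$ the two extension languages are genuinely incomparable as sets (the proof of Lemma~\ref{lemma:extLang} produces witnesses in either difference depending on the subcase). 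So the heuristic that the word with the ``richer'' prefix admits more extensions cannot yield a uniform inequality $\pext(v,m)>\pext(w,m)$, and without that the counting argument collapses: two sets can differ while having equal cardinality for every $m$.

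The fallback routes you mention are also not established. The proposal to count completions beginning with $1^{|w|}$ only bounds both quantities from below by a common term and says nothing about whether the ``correction terms'' differ; and the generating-function route assumes rationality of $\sum_m \pext(w,m)x^m$, which the paper only asserts (via \cite{stanley,LinDioph}) for the fixed-density counts of Proposition~\ref{prop:genFunc}, not for $\pext(w,m)$ summed over all densities --- and even granting rationality, distinctness of two rational functions still has to be proved. In short: what you have is a correct restatement of why the conjecture is plausible (Lemma~\ref{lemma:extLang} shows the languages differ as sets) together with an honest list of obstacles, but no step that closes the gap between ``the sets differ'' and ``the cardinalities differ for some $m$.'' The statement remains open, and your write-up should be presented as a discussion of approaches, not as a proof.
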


The values $\pext(w, m, d)$ seem hard to analyze. We give exact formulas for a few special cases of interest. Using Lemma \ref{lemma:equivDefPn}, it is possible to give formulas similar to those in Proposition \ref{prop:genFunc} for $\pext(w, m, d)$ for fixed $w$ and $d$. We only mention one such result.

\begin{lemma}
For $1\leq d\leq n$ we have $\pext(10, n+d-3, d)   = \pn(n, d)$.
\end{lemma}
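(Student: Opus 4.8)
The plan is to exhibit an explicit bijection between the two families of words being counted. On the one hand, $\pn(n,d)$ counts prefix normal words of length $n$ and density $d$; by Lemma~\ref{lemma:equivDefPn}, assuming $d\geq 1$, such a word can be written uniquely as $10^{r_1-1}10^{r_2-1}\cdots 10^{r_d-1}$ with positive integers $r_1,\ldots,r_d$ satisfying $\sum_{j=1}^d r_j = n$ together with the system of prefix-normality inequalities. On the other hand, $\pext(10, n+d-3, d)$ counts words $w'$ such that $10w'$ is prefix normal, has length $2 + (n+d-3) = n+d-1$, and density $d$. Since $10w'$ begins with $10$ and has density $d$, it can be written as $10^{r_1-1}10^{r_2-1}\cdots 10^{r_d-1}$ with $r_1 \geq 2$, positive integers $r_2,\ldots,r_d$, and $\sum_{j=1}^d r_j = n+d-1$, again subject to the same inequalities from Lemma~\ref{lemma:equivDefPn}.

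First I would set up both parametrizations precisely using Lemma~\ref{lemma:equivDefPn}, being careful about the $d=1$ edge case (where the inequality system is empty, $\pn(n,1)=1$, and $10w' = 10^{n-1}$ is the unique length-$(n-1)$ extension, so both sides equal $1$). For $d \geq 2$, the natural map is $(r_1, r_2, \ldots, r_d) \mapsto (r_1 + (d-1)\cdot 0\ \text{-- no}$; rather, I would try the shift that adds $1$ only to the first part is wrong since that would change only one coordinate. The correct observation is that the target tuples have total $n+d-1$ and minimum first coordinate $2$, while the source tuples have total $n$; the difference $d-1$ in the totals and the offset of $1$ in the lower bound on $r_1$ both point to the substitution $s_j = r_j + 1$ for $j = 1, \ldots, d$ applied to — no, that adds $d$. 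Instead: map $(r_1,\ldots,r_d)$ with $\sum r_j = n$ to $(r_1+1, r_2+1, \ldots, r_{d-1}+1, r_d)$? That adds $d-1$, giving total $n+d-1$, and makes the new first coordinate $r_1 + 1 \geq 2$. The key point to verify is that the prefix-normality inequalities of Lemma~\ref{lemma:equivDefPn} are preserved: each inequality has the form $r_1 + \cdots + r_{m} \leq r_j + \cdots + r_{j+m-1}$ with $j \geq 2$, so the left side gains exactly $m$ (indices $1$ through $m$, all shifted, since $m \leq d-1$) and the right side gains exactly $m$ as well when $j + m - 1 \leq d-1$, and gains $m-1$ when $j+m-1 = d$ — but in that last case the right side includes $r_d$ which is unshifted, so the right side gains $m - 1 < m$. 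Hence this naive shift does not obviously preserve the inequalities, and that is exactly where the real work lies.

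The hard part, therefore, is finding the substitution that makes the inequality systems literally coincide. I expect the right move is to pass to the "difference" coordinates used in the proof of Proposition~\ref{prop:genFunc}: write $q_1 = r_1 - 1$, $q_d = r_d - 1$, and $d_j = r_j - r_1$ for $2 \leq j \leq d-1$, and observe that the prefix-normality inequalities, once rewritten in these coordinates, become a system that does not involve $q_1$ at all (only the $d_j$'s, and possibly $q_d$). Concretely, for $\pn(n,d)$ one is counting nonnegative solutions of a linear equation in $q_1, d_2, \ldots, d_{d-1}, q_d$ subject to inequalities among the $d_j$'s; for $\pext(10, n+d-3, d)$ the constraint $r_1 \geq 2$ becomes $q_1 \geq 1$, i.e. $q_1 = q_1' + 1$ with $q_1' \geq 0$, and the total becomes $n + d - 1 - 2 = \ldots$; tracking the arithmetic shows the resulting equation in the primed variables is the same equation as for $\pn(n,d)$, while the inequality system is identical because it never touched $q_1$. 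Thus the bijection is "subtract $1$ from $r_1$" on the extension side, which is clean precisely because prefix-normality constrains $r_1$ only from above (it appears only on the left-hand sides of the inequalities, and decreasing $r_1$ only helps). I would phrase the final proof that way: show via Lemma~\ref{lemma:equivDefPn} that $10w'$ prefix normal of density $d$ with $|w'| = n+d-3$ corresponds bijectively, by replacing the leading run $10^{r_1-1}$ with $10^{r_1-2}$, to a prefix normal word of length $n$ and density $d$, checking that decrementing $r_1$ preserves all the inequalities (trivial, as $r_1$ occurs only on the smaller side) and that the length changes from $n+d-1$ to $n$ (it drops by $1$... wait, by $d-1$) — so in fact one must decrement the whole prefix, i.e. the genuinely correct bijection needs to be pinned down by the difference-coordinate computation above, and I would present it in those coordinates to avoid any ambiguity.
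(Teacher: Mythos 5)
Your plan, once the detours are stripped out, lands on a correct proof and in fact on the very same bijection the paper uses: the paper's map is ``insert a $0$ immediately before every $1$ except the first,'' which in run--length coordinates is exactly your ``naive shift'' $(r_1,\ldots,r_d)\mapsto(r_1+1,\ldots,r_{d-1}+1,r_d)$. The one genuine error in your reasoning is the step where you reject that shift. In Lemma~\ref{lemma:equivDefPn} the inequality in row $m$ reads $r_1+\cdots+r_m\le r_j+\cdots+r_{j+m-1}$ with $j$ ranging only up to $d-m$, so the largest index ever appearing on the right-hand side is $j+m-1\le d-1$; the variable $r_d$ occurs in no inequality whatsoever (prefix normality is blind to trailing zeros). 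Hence both sides of every inequality gain exactly $m$ under the shift, the case ``$j+m-1=d$'' that worried you never arises, and no further work is needed. Your difference-coordinate fallback ($q_1=r_1-1$, $d_j=r_j-r_1$, $q_d=r_d-1$) is also correct --- the inequalities reduce to relations among the $d_j$ alone and the substitution $q_1'=q_1-1$ reproduces the length equation for $\pn(n,d)$ --- but it is just a more roundabout verification of the same map. For the inverse direction you should note that a prefix normal word starting with $10$ has $F_1(\cdot,2)=1$, hence avoids $11$, hence satisfies $s_j\ge 2$ for $j\le d-1$ (alternatively, $s_j\ge s_1\ge 2$ by the first row of inequalities), so subtracting $1$ from $s_1,\ldots,s_{d-1}$ keeps all runs positive. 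The only substantive difference from the paper is the verification route: the paper checks preservation of prefix normality directly, via the observation that $w$ has a prefix (resp.\ factor) of length $k$ with $r$ ones iff its image has one of length $k+r-1$ with $r$ ones, and never invokes Lemma~\ref{lemma:equivDefPn}; your route through the inequality system is equally valid and arguably more mechanical, at the cost of handling $d=1$ separately as you propose.
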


\begin{proof}
Consider the following map: let $w$ be an arbitrary word of length $n$ and density $d>1$, starting with $1$. Except for the starting $1$, insert a $0$ right before each subsequent occurrence of 1. This gives a word $w'$ of length $n+d-1$, starting with $10$ that does not contain the factor $11$. Clearly, the map is injective and all words of length $n+d-1$ starting with $10$ and containing no factor $11$ are obtained this way. In order to prove the lemma, we only need to show that prefix normality is preserved by the map and its inverse. For this, observe that there exists a prefix (resp.\ factor) of $w$ of length $k$ containing $r$ $1$s if and only if there exists a prefix (resp.\ factor) of $w'$ of length $k+r-1$ containing $r$ $1$s.
\hfill \qed
\end{proof}

The following lemma lists exact values for $\pext(w, |w|)$ for some infinite families of words $w$. Here $F(n)$ denotes the $n$th Fibonacci number, i.e.\ $F(1)=F(2)=1$ and $F(n+2) = F(n+1)+F(n)$.

\begin{lemma}
For all values of $n$ where the exponents are nonnegative, we have the following formulas:
\begin{center}
\begin{minipage}{7.2cm}
\begin{eqnarray*}
&&\pext(0^n,n) = 1\\[1mm]
&& \pext(1^n, n) = 2^n \\[1mm]
&& \pext(1^{n-1}0, n) = 2^n - 1 \\[1mm]
&& \pext(1^{n-2}01,n) = 2^n - 5 \\[1mm]
&& \pext(1^{n-2}00,n) = 2^n - (n+1)
\end{eqnarray*}
\end{minipage}
\begin{minipage}{7.2cm}
\begin{eqnarray*}
&&\pext((10)^{\frac n2}, n) = F(n+2) \textrm { if }n \textrm{ is even }\\[2mm]
&&\pext((10)^{\frac {n-1}2}1,n) = F(n+1) \textrm{ if }n\textrm{ is odd}\\[2mm]
&&\pext(10^{n-2}1,n) = 3\\[2mm]
&&\pext(10^{n-1},n) = n+1
\end{eqnarray*}
\end{minipage}
\end{center}

\end{lemma}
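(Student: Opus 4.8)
The plan is to handle the eight formulas by grouping them according to the structure of the prefix $w$, using Lemma~\ref{lemma:equivDefPn} (via the run-length encoding) as the workhorse, together with the extension characterization of Lemma~\ref{lemma:test} when a direct run-length argument is awkward. The two trivial cases first: $\pext(0^n,n)=1$ because by Proposition~\ref{prop:La_properties}(3) a prefix normal word starting with $0$ must be $0^m$, so the only length-$n$ extension of $0^n$ is $0^n$; and $\pext(1^n,n)=2^n$ because by Proposition~\ref{prop:La_properties}(2) $1^n w'$ is prefix normal for \emph{every} $w'$ with $|w'|\le n$ (indeed $1^{|w'|}w'$ is prefix normal, and prepending more $1$s preserves this).

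For the family $\pext(1^{n-1}0,n)$, $\pext(1^{n-2}01,n)$, $\pext(1^{n-2}00,n)$, I would count the \emph{non}-extensions: a word $1^{a}0\cdots$ of total length $2n$ with $a\in\{n-1,n-2\}$ fails prefix normality exactly when some factor packs more $1$s than the corresponding prefix, and since the prefix of length $k$ has $\min(k,a)$ ones, the only way to violate this is to have a factor of length $\le a+1$ (lying entirely to the right of the initial block) that is all $1$s of length $>a$, i.e.\ a run of $a+1$ consecutive $1$s somewhere in positions $a+2,\ldots,2n$, or — for the $1^{n-2}01$ and $1^{n-2}00$ cases where the prefix of length $a+1$ already has only $a$ ones — a factor of length $a+1$ with $a+1$ ones, again a run of $1^{a+1}$, OR a shorter obstruction caused by the specified third/fourth letter. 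Concretely: for $1^{n-1}0$ the bad words are those with a factor $1^{n}$ after position $n$, of which there are exactly one (namely $1^{n-1}0\cdot 1^{n}$ has such a factor only when the tail is $1^n$ — more care needed here), giving $2^n-1$; for $1^{n-2}01$ a small finite set of forbidden tail-patterns of bounded length gives the constant $5$; for $1^{n-2}00$ the obstruction $1^{n-1}$ can start at any of $n+1$ positions in the length-$n$ tail, giving $2^n-(n+1)$. Each of these is a routine inclusion–exclusion over a constant or linear number of "templates," and I would verify the templates are mutually exclusive (or correct for overlaps).

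For the periodic family, $\pext((10)^{n/2},n)$ and $\pext((10)^{(n-1)/2}1,n)$: here I would translate to run-lengths. A word $ww'$ with $w=(10)^{n/2}$ that is prefix normal has $\pos_1(k)=2k-1$ for the first $n/2$ ones, so Lemma~\ref{lemma:equivDefPn}'s first inequality forces $r_1=\cdots=r_{n/2}\le r_j$, i.e.\ the gap structure in the tail cannot create two consecutive positions that are both $1$ — equivalently $w'$ avoids the factor $11$ \emph{at the junction and internally}, while the cumulative inequalities turn out to be automatically satisfied. So $\pext((10)^{n/2},n)$ counts binary strings of length $n$ with no two consecutive $1$s (allowing for the boundary condition from the trailing $0$ of $w$), whose number is the Fibonacci number $F(n+2)$; the odd case is analogous with a shifted boundary giving $F(n+1)$. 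I would state the no-$11$ reduction cleanly as a lemma-within-the-proof and then cite the standard Fibonacci count.

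Finally $\pext(10^{n-2}1,n)$ and $\pext(10^{n-1},n)$: for $w=10^{n-1}$ we have $P_1(w,k)=1$ for all $k\ge1$, so $wv$ is prefix normal iff $v$ has no factor $11$ \emph{and} no factor $1$ preceded far enough to exceed — actually iff $v$ itself has at most one $1$ total (any two $1$s anywhere in $0^{n-1}v$ form a length-$\le n$ factor with two $1$s beating the length-$2$ prefix $10$, wait the length-$2$ prefix is $10$ with one $1$) — more precisely $v\in\{0^n\}\cup\{0^{i}10^{n-1-i}:0\le i\le n-1\}$, giving $n+1$; and for $w=10^{n-2}1$ the prefix has two $1$s but widely spaced, so only $v\in\{0^n, 0^{n-1}1, \text{one more}\}$ works, giving $3$.

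\textbf{Main obstacle.} The delicate point is the $1^{n-1}0$ family: unlike $1^{n-2}00$ where the "bad" run $1^{n-1}$ can float over a linear window, for $1^{n-1}0$ the forbidden run is $1^n$, which in a length-$2n$ word starting with $1^{n-1}0$ must occupy the last $n$ positions exactly, seemingly giving $2^n-1$ immediately — but one must rule out \emph{other} obstructions (factors of length $<n$ with too many $1$s relative to sub-prefixes like $1^{n-1}0$ whose length-$k$ prefix for $k\le n-1$ is $1^k$, so a length-$k$ all-ones factor is \emph{not} bad), and confirm no double-counting. Getting all the small exclusion sets \emph{exactly} right — especially the constant $5$ for $1^{n-2}01$, which requires enumerating precisely which bounded-length tail patterns break normality and checking none are subsumed by the $1^{n-1}$-run obstruction — is where the real casework lies; everything else is either immediate from the cited propositions or a standard Fibonacci/lattice-point count.
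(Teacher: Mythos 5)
Your overall strategy is the same as the paper's (count the failing extensions for the $1^a0\cdots$ family, count the good extensions directly for $0^n$, $10^{n-1}$, $10^{n-2}1$, and reduce the $(10)^{n/2}$ cases to counting $11$-avoiding strings, hence Fibonacci numbers), and most of your cases go through. But there is a genuine error in the $1^{n-2}00$ case, stemming from your blanket claim that the only possible violation for these prefixes is an all-ones factor of length greater than $a$. For $w=1^{n-2}00$ the prefix of length $n$ is $1^{n-2}00$ with only $n-2$ ones, so the relevant obstruction is a factor of length $n$ with at least $n-1$ ones --- in particular the length-$n$ suffix $w'$ itself --- and such a factor need not be a consecutive run of $1$s. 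The correct characterization (and the one the paper uses) is that $w'$ is bad exactly when $|w'|_1\ge n-1$, giving $\binom{n}{n}+\binom{n}{n-1}=n+1$ bad tails, namely $1^n,1^{n-1}0,1^{n-2}01,\ldots,01^{n-1}$. Your stated obstruction, ``the run $1^{n-1}$ occurs somewhere in the length-$n$ tail,'' identifies only the $3$ tails $1^{n-1}0$, $1^n$, $01^{n-1}$ (a length-$(n-1)$ run cannot start at $n+1$ positions of a length-$n$ string), and would yield $2^n-3$ rather than $2^n-(n+1)$; you write the correct final value but the argument does not produce it. Relatedly, the constant $5$ for $1^{n-2}01$ is asserted but never derived: the paper's enumeration is the $1$-runs of length $n-1$, i.e.\ $w'=1^{n-2}xy$ (four words) together with $w'=01^{n-1}$ (one more).

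The remaining cases are essentially sound and match the paper: $\pext(0^n,n)=1$ and $\pext(1^n,n)=2^n$ are immediate from Proposition~\ref{prop:La_properties}; for $1^{n-1}0$ the unique bad tail is $1^n$; for $10^{n-1}$ the good tails are exactly those with at most one $1$ (any two $1$s in the tail would lie in a window of length at most $n$, beating the one-$1$ prefix of that length), giving $n+1$; for $10^{n-2}1$ the same distance argument confines a possible $1$ to the last two positions of the tail and forbids two $1$s, giving the three tails $0^n$, $0^{n-2}10$, $0^{n-1}1$; and the Fibonacci counts follow, as in the paper, from the observation that $(10)^{n/2}w'$ is prefix normal iff $w'$ avoids $11$ (the junction is harmless since $w$ ends in $0$), with the odd case reducing to the even case of length $n+1$. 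To repair the proof you only need to replace the ``floating run'' argument for $1^{n-2}00$ with the factor-of-length-$n$ comparison and carry out the five-element enumeration for $1^{n-2}01$ explicitly.
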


\begin{proof}
For $w=1^n$, $w=1^{n-1}0$, $w=1^{n-2}01$ and $w=1^{n-2}00$, it is easy to count those extensions that fail to give prefix normal words: None for $w=1^n$; only one for $w=1^{n-1}0$, namely $1^{n-1}01^n$; for $w=1^{n-2}01$, those extensions which contain a $1$-run of length $n-1$, namely $1^{n-2}$ followed by any two characters, or $01^{n-1}$; and for $w=1^{n-2}00$, those that contain at least $n-1$ many $1$s in the second half, i.e.\ with second half $1^n, 1^{n-1}0, 1^{n-2}01, \ldots, 01^{n-1}$.

Similarly, for $w=10^{n-2}1$, $w=10^{n-1}$ and $w=0^n$, counting the extensions that yield prefix normal words gives the result in a straightforward way.

Let $n$ be even. For $w=(10)^{\frac n2}$, note that $ww'$ is prefix normal if and only if $w'$ avoids $11$. The number of such words is known to equal $F(n+2)$. For $n$ odd, the argument is similar, with the prefix of interest, $w1$, being of length $n+1$, hence the previous Fibonacci number.
\hfill \qed
\end{proof}

\subsection{Some experimental results about enumeration of prefix normal words}\label{sec:enum4}

We consider extensions of prefix normal words by a single symbol to the right. It turns out that this question has implications for the enumeration of prefix normal words.

\begin{definition}[Extension-critical words]
We call a prefix normal word $w$ {\em extension-critical} if $w1$ is not prefix normal. Let $\crit(n)$ denote the number of extension-critical words in $\LPNone\cap \Sigma^n$.
\end{definition}

The lemma below applies to any family of words $B$ for which $\varepsilon \in B$ and such that $x \in B$ implies $x0 \in B$.

\begin{lemma} For $n\geq 1$ we have
\begin{equation}
\label{eq:crit1}
\pn(n) = 2\pn(n-1) - \crit(n-1) = \pn(n-1)\left(2-\frac{\crit(n-1)}{\pn(n-1)}\right).
\end{equation}
From this it follows that
\begin{equation}
\label{eq:critProduct}
\pn(n) = 2\prod_{i=1}^{n-1} \left(2-\frac{\crit(i)}{\pn(i)}\right).
\end{equation}
\end{lemma}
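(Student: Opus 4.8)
The plan is to prove the recurrence \eqref{eq:crit1} first by a direct counting argument, and then derive \eqref{eq:critProduct} from it by a trivial induction (telescoping the product). The key observation is the following bijective/counting setup: by Proposition~\ref{prop:La_properties}(1), $\LPNone$ is prefix-closed, so every word in $\LPNone\cap\Sigma^n$ is obtained by appending a single symbol to a word in $\LPNone\cap\Sigma^{n-1}$. Moreover, by Proposition~\ref{prop:La_properties}(2), if $w\in\LPNone$ then $w0\in\LPNone$, so appending $0$ always succeeds. Thus the map $w\mapsto$ (prefix of $w$ of length $n-1$) sends $\LPNone\cap\Sigma^n$ onto $\LPNone\cap\Sigma^{n-1}$, and the fibre over $u\in\LPNone\cap\Sigma^{n-1}$ has size $2$ if $u1\in\LPNone$ and size $1$ if $u1\notin\LPNone$, i.e.\ if $u$ is extension-critical. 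Counting by fibres gives
\[
\pn(n) = \sum_{u\in\LPNone\cap\Sigma^{n-1}} |\{a\in\Sigma : ua\in\LPNone\}| = 2\bigl(\pn(n-1)-\crit(n-1)\bigr) + \crit(n-1) = 2\pn(n-1)-\crit(n-1),
\]
and the second equality in \eqref{eq:crit1} is just factoring out $\pn(n-1)$ (which is nonzero for $n-1\ge 0$ since $\epsilon,0,00,\dots$ are all prefix normal). This establishes \eqref{eq:crit1}; the parenthetical remark that the argument works for any $B$ with $\epsilon\in B$ and $x\in B\Rightarrow x0\in B$ is exactly the abstraction of the fact that we used only prefix-closedness under truncation and closure under appending $0$.

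For \eqref{eq:critProduct}, I would argue by induction on $n$. The base case $n=1$ reads $\pn(1)=2=2\prod_{i=1}^{0}(\cdots)=2$, which holds since the empty product is $1$ and $\pn(1)=|\{0,1\}|=2$. For the inductive step, assume $\pn(n-1) = 2\prod_{i=1}^{n-2}\bigl(2-\crit(i)/\pn(i)\bigr)$. Then by the second form of \eqref{eq:crit1},
\[
\pn(n) = \pn(n-1)\left(2-\frac{\crit(n-1)}{\pn(n-1)}\right) = 2\prod_{i=1}^{n-2}\left(2-\frac{\crit(i)}{\pn(i)}\right)\cdot\left(2-\frac{\crit(n-1)}{\pn(n-1)}\right) = 2\prod_{i=1}^{n-1}\left(2-\frac{\crit(i)}{\pn(i)}\right),
\]
completing the induction.

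There is essentially no hard part here; the only thing to be careful about is the well-definedness of the quantities $\crit(i)/\pn(i)$, which requires $\pn(i)\ne 0$ for all $i\ge 1$ — this is immediate since $0^i$ is always prefix normal (Proposition~\ref{prop:La_properties}(2) with $w=\epsilon$, or directly from the definition). A secondary point worth stating explicitly for cleanliness is why the fibre has size at most $2$ and why $w0$ never fails: both follow from Proposition~\ref{prop:La_properties}, so no new argument is needed. I would present the proof in two short paragraphs: one giving the fibre-counting derivation of \eqref{eq:crit1} (with the remark about general $B$), and one giving the one-line telescoping induction for \eqref{eq:critProduct}.
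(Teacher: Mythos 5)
Your proof is correct and follows essentially the same route as the paper: the fibre-counting argument is just a rephrasing of the paper's count of length-$n$ prefix normal words ending in $0$ (there are $\pn(n-1)$ of them) versus those ending in $1$ (there are $\pn(n-1)-\crit(n-1)$), and the telescoping induction for \eqref{eq:critProduct} matches the paper's use of $\pn(n)=\pn(1)\prod_{i=1}^{n-1}\pn(i+1)/\pn(i)$. Your explicit remarks on surjectivity of truncation and on $\pn(i)\neq 0$ are fine but not new ideas.
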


\begin{proof}
The number of prefix normal words of length $n$ ending in $0$ is $\pn(n-1)$, that of prefix normal words of length $n$ ending in $1$ is $\pn(n-1)-\crit(n-1)$, hence we have  \eqref{eq:crit1}. The product form follows if we use $\pn(n) = \pn(1)  \prod_{i=1}^{n-1} \frac{\pn(i+1)}{\pn(i)}$. \qed
\end{proof}

\begin{lemma} For $n$ going to infinity, $\lim\inf \crit(n)/\pn(n)  =  0$.
\end{lemma}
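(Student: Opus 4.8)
The plan is to argue by contradiction, using the product formula \eqref{eq:critProduct} together with the exponential lower bound of Theorem~\ref{thm:lowBound}. Writing each factor of \eqref{eq:critProduct} as $2 - \crit(i)/\pn(i) = 2\bigl(1 - \tfrac{\crit(i)}{2\pn(i)}\bigr)$, and recalling $\pn(1)=2$, the formula becomes
\[
\frac{\pn(n)}{2^n} \;=\; \prod_{i=1}^{n-1}\Bigl(1 - \frac{\crit(i)}{2\pn(i)}\Bigr).
\]
Since every extension-critical word of length $i$ is in particular a prefix normal word of length $i$, we have $0\le \crit(i)\le \pn(i)$, so each factor lies in $(0,1]$ and its logarithm is well defined.

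Next I would take logarithms and apply the elementary inequality $\ln(1-x)\le -x$, valid for $x\in[0,1)$, to obtain
\[
\sum_{i=1}^{n-1} \frac{\crit(i)}{2\pn(i)} \;\le\; -\sum_{i=1}^{n-1}\ln\Bigl(1 - \frac{\crit(i)}{2\pn(i)}\Bigr) \;=\; -\ln\frac{\pn(n)}{2^n} \;=\; n\ln 2 - \ln \pn(n).
\]
By Theorem~\ref{thm:lowBound}, for all sufficiently large $n$ we have $\ln\pn(n)\ge (n-4\sqrt{n\log n})\ln 2$, so the right-hand side is at most $4\sqrt{n\log n}\,\ln 2$. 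Hence $\sum_{i=1}^{n-1}\crit(i)/\pn(i) = O(\sqrt{n\log n})$, and in particular these partial sums are $o(n)$.

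Finally, suppose for contradiction that $\liminf_{n\to\infty}\crit(n)/\pn(n) = c > 0$. Then there is an $N$ such that $\crit(i)/\pn(i)\ge c/2$ for all $i\ge N$, so $\sum_{i=1}^{n-1}\crit(i)/\pn(i)\ge \tfrac{c}{2}(n-N)$ grows linearly in $n$, contradicting the $o(n)$ bound just obtained. Therefore $\liminf_{n\to\infty}\crit(n)/\pn(n)=0$. The only step requiring any care is the transition from the product formula \eqref{eq:critProduct} to a bound on $\sum_i \crit(i)/\pn(i)$ via the $\ln(1-x)\le -x$ estimate; everything else is routine. In fact this reasoning yields the stronger averaged statement $\sum_{i=1}^{n}\crit(i)/\pn(i)=O(\sqrt{n\log n})$, so the ratios $\crit(n)/\pn(n)$ are on average of order $\sqrt{(\log n)/n}$, not merely small along a subsequence.
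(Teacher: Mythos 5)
Your proof is correct and follows essentially the same route as the paper: both argue by contradiction from the product formula \eqref{eq:critProduct} combined with the lower bound of Theorem~\ref{thm:lowBound} (the paper simply notes that an eventually positive ratio would force $\pn(n)=O((2-\varepsilon)^n)$, without passing through logarithms). Your logarithmic version is a slightly more quantitative execution of the same idea, and the by-product $\sum_{i\le n}\crit(i)/\pn(i)=O(\sqrt{n\log n})$ is a genuine (if modest) strengthening.
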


\begin{proof}
Assume that there exist an integer $N_0$ and a real number $\varepsilon > 0$ such that for $n\geq N_0$ we have $\crit(n)/\pn(n) > \varepsilon$. Then by  \eqref{eq:critProduct} we would have $\pn(n) = O((2-\varepsilon)^n)$, contradicting Theorem \ref{thm:lowBound}. \qed
\end{proof}

We conjecture that in fact the ratio of extension-critical words converges to $0$. We study the behavior of $\crit(n)/\pn(n)$ for $n\leq 49$. The left plot in Fig.~\ref{fig:Crit} shows the ratio of extension-critical words for $n\leq 49$. These data support the conjecture that the ratio tends to $0$. Interestingly, the values decrease monotonically for both odd and even values, but we have $\crit(n+1)/\pn(n+1) > \crit(n)/\pn(n)$ for even $n$. We were unable to find an explanation for this.

The right plot in Fig.~\ref{fig:Crit} shows the ratio of extension-critical words multiplied by $n/\log n$. Apart from a few initial data points, the values for even $n$ increase monotonically and the values for odd $n$ decrease monotonically, and the values for odd $n$ stay above those for even $n$.

\begin{conjecture}
Based on empirical evidence, we conjecture the following:
\begin{eqnarray}
\crit(n) &=& \pn(n) \Theta(\log n / n) ,\\
\pn(n) &=& 2^{n-\Theta((\log n)^2)} .
\end{eqnarray}
\end{conjecture}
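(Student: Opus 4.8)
\medskip
\noindent\textbf{A proof roadmap.}
The two displayed estimates are essentially equivalent, so I would aim everything at the single ratio $\crit(n)/\pn(n)$. Taking logarithms in the product formula~\eqref{eq:critProduct} and using $\lg(2-y)=1+\lg(1-y/2)$ yields the exact identity
\[
n-\lg\pn(n)\;=\;-\sum_{i=1}^{n-1}\lg\!\Bigl(1-\tfrac{\crit(i)}{2\,\pn(i)}\Bigr),
\]
all of whose summands are nonnegative. Since $-\lg(1-y/2)=\Theta(y)$ for $y\in[0,1)$, the first assertion $\crit(n)/\pn(n)=\Theta(\log n/n)$ forces $n-\lg\pn(n)=\Theta\bigl(\sum_{i\le n}(\log i)/i\bigr)=\Theta((\log n)^2)$, which is exactly the second assertion; a converse follows from~\eqref{eq:crit1} under a mild regularity hypothesis on the sequence $\crit(n)/\pn(n)$. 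Hence the whole statement reduces to proving $\crit(n)=\Theta\!\bigl(\pn(n)\,\log n/n\bigr)$, and I would spend all effort there.

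The combinatorial starting point is a reading of Lemma~\ref{lemma:test}: a prefix normal word $w=w_1\cdots w_n\in 1\Sigma^*$ is extension-critical if and only if there is an index $1\le k\le n-1$ with $w_{k+1}=0$ whose length-$k$ suffix is a maximum-ones factor, i.e.\ contains exactly $\pr_1(w,k)$ ones. (Prefix normality already bounds the ones in every length-$k$ factor by $\pr_1(w,k)$, and appending a $1$ breaks normality precisely when such a bound is tied while in addition $\pr_1(w,k+1)=\pr_1(w,k)$, that is, $w_{k+1}=0$.) For the upper bound $\crit(n)=O(\pn(n)\log n/n)$ I would fix the smallest critical index $k$ of $w$ and use the rigidity behind Theorem~\ref{thm:pnf} — a prefix normal word is completely determined by its maximum-ones function — to show that the requirement ``the length-$k$ suffix attains $F_1(w,k)$'' leaves few choices for $w$, the heuristic being that only $\Theta(\log n)$ scales of $k$ carry appreciable weight, $\Theta(\pn(n)/n)$ each. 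For the lower bound $\crit(n)=\Omega(\pn(n)\log n/n)$ I would, for each scale $\ell$ up to $\Theta(\log n)$, build a family of $\Omega(\pn(n)/n)$ prefix normal words of length $n$ whose initial $1$-run has length $>\ell$ and which end in $0\,1^{\ell}$ — so that $k=\ell$ is critical — using Proposition~\ref{prop:La_properties}(2) to retain a linear block of free positions whose prefix normal completions are counted by a near-copy of $\pn$, and then show that distinct scales contribute almost disjointly.

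The main obstacle, and the reason this is stated only as a conjecture, is precisely the per-scale cardinality bookkeeping in both directions: one must estimate, to within constant factors, how many prefix normal words of length $n$ admit a suffix of a prescribed length that attains the maximum-ones value at that length, and the self-referential definition of $\LPNone$ (``every factor is dominated by the prefix of equal length'') provides no transfer-matrix or generating-function handle on this count — exactly as the fixed-density methods of Section~\ref{sec:enum2} were seen not to extend. I therefore expect that a complete proof will require a genuinely new structural description of prefix normal words. Finally, the parity oscillation visible in Figures~\ref{fig:pnf_increase} and~\ref{fig:Crit} strongly suggests that the implied $\Theta$-constants differ for even and odd $n$, so even a successful argument will most likely have to treat the two parities separately; a safer first target is to determine $\crit(n)/\pn(n)$ up to $\polylog(n)$ factors before attempting the sharp $\Theta(\log n/n)$.
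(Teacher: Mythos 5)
This statement is a \emph{conjecture} in the paper: the authors offer no proof, only the empirical data of Fig.~\ref{fig:Crit} and the one-line remark that the second estimate follows from the first by \eqref{eq:critProduct}. So there is no paper proof to compare against, and your proposal --- which explicitly stops short of claiming a proof --- is at least in the right genre. The parts of your roadmap that are actually rigorous do check out. The identity $n-\lg\pn(n)=-\sum_{i=1}^{n-1}\lg\bigl(1-\crit(i)/(2\pn(i))\bigr)$ is a correct rearrangement of \eqref{eq:critProduct}, and since $-\lg(1-y/2)=\Theta(y)$ uniformly for $y\in[0,1]$, the implication $\crit(n)/\pn(n)=\Theta(\log n/n)\Rightarrow n-\lg\pn(n)=\Theta((\log n)^2)$ is sound; this is exactly the content of the paper's closing remark, and your caveat that the converse direction needs a regularity hypothesis on the sequence $\crit(n)/\pn(n)$ is an honest and correct refinement of it. Your reading of Lemma~\ref{lemma:test} is also correct: for prefix normal $w\in 1\Sigma^*$, the word $w1$ fails to be prefix normal iff some length-$k$ suffix attains $F_1(w,k)=\pr_1(w,k)$ while $w_{k+1}=0$, because prefix normality already caps every length-$k$ factor at $\pr_1(w,k)$, so the failure condition forces equality throughout.

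That said, the substance of the conjecture --- $\crit(n)=\Theta(\pn(n)\log n/n)$ --- is not established by your proposal, and you say so yourself: the ``per-scale cardinality bookkeeping'' is the entire difficulty, and both the upper- and lower-bound sketches are heuristics rather than arguments (for instance, ``only $\Theta(\log n)$ scales of $k$ carry appreciable weight, $\Theta(\pn(n)/n)$ each'' asserts the conclusion rather than deriving it, and the proposed lower-bound families are not shown to have the claimed size or to be almost disjoint across scales). The only unconditional facts currently available in this direction are Theorems~\ref{thm:lowBound} and~\ref{thm:upBound}, which pin $n-\lg\pn(n)$ between roughly $\lg n-1$ and $4\sqrt{n\lg n}$ --- both far from $(\log n)^2$ --- together with the lemma that $\liminf\crit(n)/\pn(n)=0$. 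Your roadmap contains no false step and is a reasonable plan of attack, but it should be labelled as a plan: the conjecture remains open, in your write-up as in the paper.
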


Note that the second estimate follows from the first one by  \eqref{eq:critProduct}.

\begin{figure}
\begin{minipage}[c]{6cm}
\begin{center}
\includegraphics[scale=0.3]{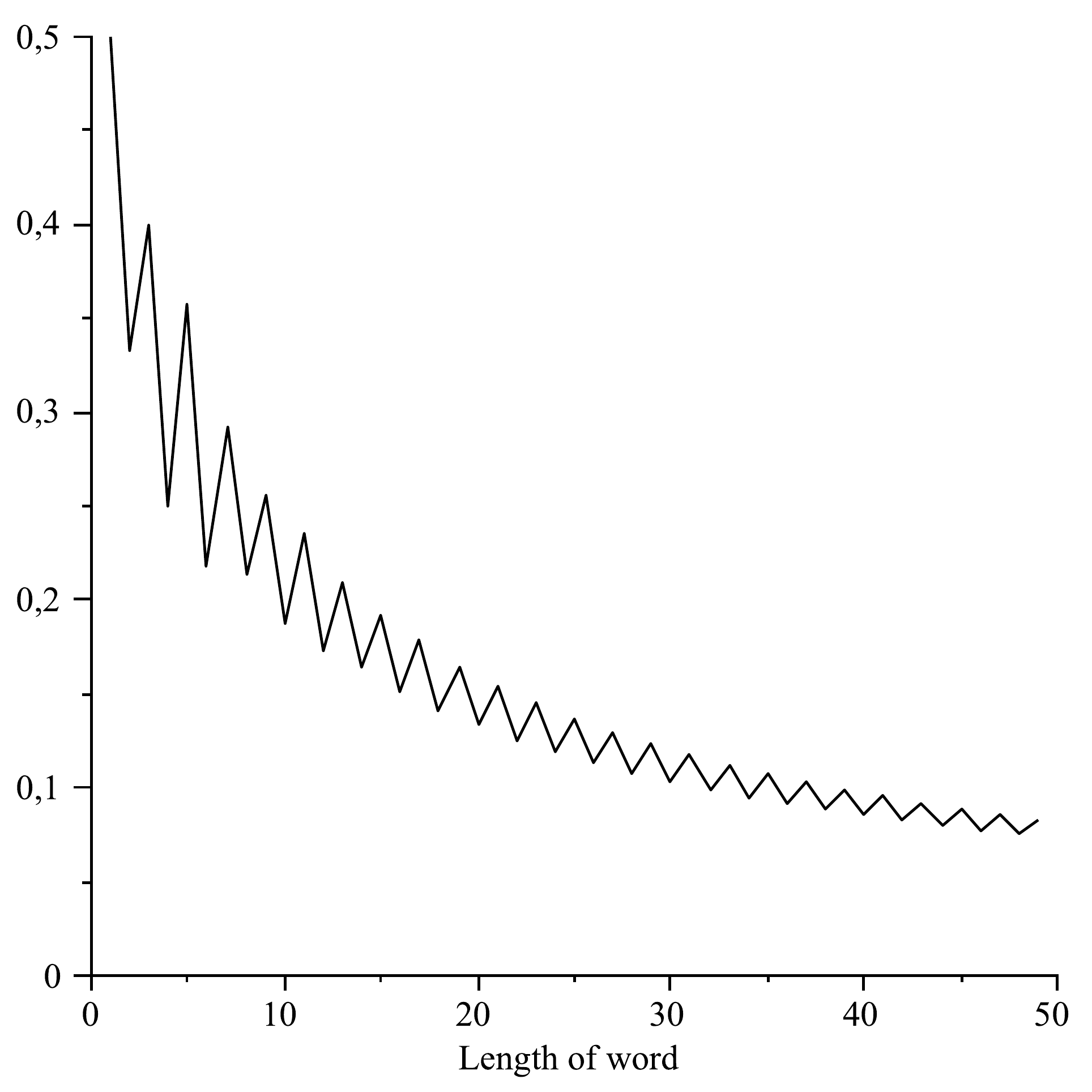}
\end{center}
\end{minipage}
\begin{minipage}[c]{6cm}
\begin{center}
\includegraphics[scale=0.3]{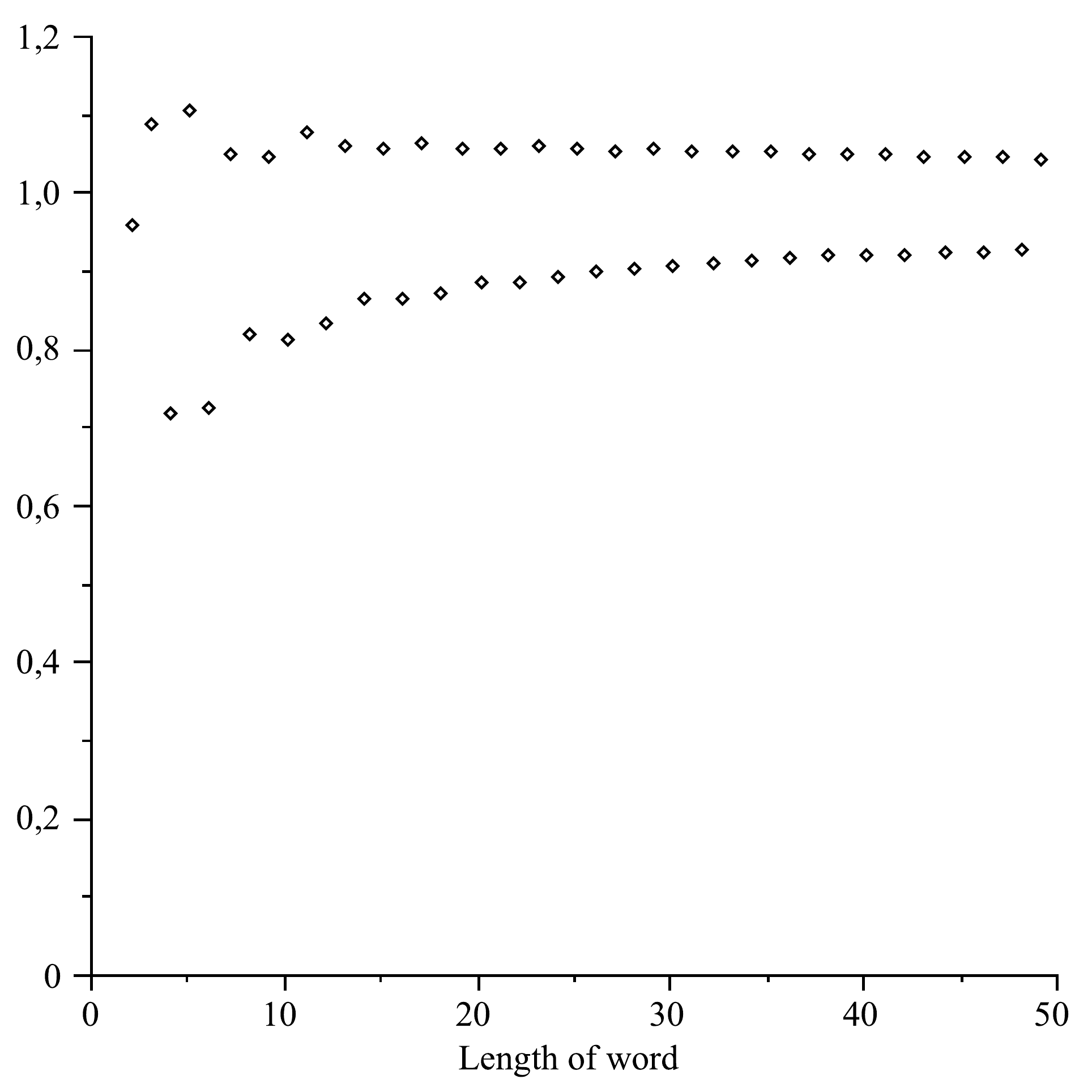}
\end{center}
\end{minipage}
\caption{The ratio $\frac{\crit(n)}{\pn(n)}$ (left), and the value $\frac{\crit(n)}{\pn(n)}\cdot\frac{n}{\ln n}$ (right).}
\label{fig:Crit}
\end{figure}

\section{Conclusion and open problems}
We introduced two new normal forms of binary words, the prefix normal forms with respect to $1$ and $0$, and showed how they arise naturally in the investigation of Parikh sets of binary words and jumbled pattern matching. We introduced prefix normal words (w.r.t.\ $1$ or $0$), words which equal their own normal form, and discussed several properties of these words. We showed results about the language of prefix normal words, among these that \pword{0} are strictly contained in the language of pre-necklaces. We also discussed extensively the growth behavior of the number of fixed-length prefix normal words.

Many open problems remain. It would be nice to have exact, or at least more precise asymptotic formulas for the enumeration of prefix normal words. Related to the enumeration, the strange oscillating behavior in Figures~\ref{fig:pnf_increase} and~\ref{fig:Crit} between odd and even values calls for an explanation.

Another question is testing binary words for prefix normality. Currently, no faster method is known (in worst-case running time), then calculating the normal form.

It would be an interesting direction to explore the connection between the normal forms w.r.t.\ $1$ and $0$, for example how many different values can $\PNF_0(w)$ take (and what can we say about them) if we fix $\PNF_1(w)$.

Finally, prefix normality could also be
defined over non-binary alphabets. In this case however, we do not obtain
an index directly applicable to jumbled pattern matching. Combinatorial or
formal language theoretic investigation and enumeration of prefix
normal words for general alphabets is subject of future work.

\section*{Acknowledgements}
Gabriele Fici was partially supported by the PRIN 2010/2011 project ``Automi e Linguaggi Formali: Aspetti Matematici e Applicativi'' of the Italian Ministry of Education (MIUR).  The research of Joe Sawada and Frank Ruskey was partially funded by grants from the National Engineering Research Council of Canada. We thank an anonymous referee for a very careful reading and helpful suggestions.

%
%

\end{document}